\newtheorem{claim}{}[section]
\newtheorem{theorem}[claim]{Theorem}
\newtheorem{lemma}[claim]{Lemma}
\newtheorem{proposition}[claim]{Proposition}
\newtheorem{corollary}[claim]{Corollary}
\theoremstyle{remark}
\renewenvironment{proof}{\noindent{\it Proof. \hskip0pt}}
                      {$\square$\par\medskip}
\begin{document}
\baselineskip 6.0 truemm
\parindent 1.5 true pc

\newcommand\lan{\langle}
\newcommand\ran{\rangle}
\newcommand\tr{\operatorname{Tr}}
\newcommand\ot{\otimes}
\newcommand\ttt{{\text{\sf t}}}
\newcommand\rank{\ {\text{\rm rank of}}\ }
\newcommand\choi{{\rm C}}
\newcommand\dual{\star}
\newcommand\flip{\star}
\newcommand\cp{{{\mathbb C}{\mathbb P}}}
\newcommand\ccp{{{\mathbb C}{\mathbb C}{\mathbb P}}}
\newcommand\pos{{\mathcal P}}
\newcommand\tcone{T}
\newcommand\mcone{K}
\newcommand\superpos{{{\mathbb S\mathbb P}}}
\newcommand\blockpos{{{\mathcal B\mathcal P}}}
\newcommand\jc{{\text{\rm JC}}}
\newcommand\dec{{\mathbb D}{\mathbb E}{\mathbb C}}
\newcommand\decmat{{\mathcal D}{\mathcal E}{\mathcal C}}
\newcommand\ppt{{\mathcal P}{\mathcal P}{\mathcal T}}
\newcommand\pptmap{{\mathbb P}{\mathbb P}{\mathbb T}}
\newcommand\xxxx{\bigskip\par {\color{red}\sf ========= Stop to read here ========= }}
\newcommand\join{\vee}
\newcommand\meet{\wedge}
\newcommand\ad{{\text{\rm Ad}}}
\newcommand\ldual{\varolessthan}
\newcommand\rdual{\varogreaterthan}
\newcommand{\slmp}{{\mathcal M}^{\text{\rm L}}}
\newcommand{\srmp}{{\mathcal M}^{\text{\rm R}}}
\newcommand{\smp}{{\mathcal M}}
\newcommand{\id}{{\text{\rm id}}}
\newcommand\tsum{\textstyle\sum}
\newcommand\hada{\Theta}
\newcommand\ampl{\mathbb A^{\text{\rm L}}}
\newcommand\ampr{\mathbb A^{\text{\rm R}}}
\newcommand\amp{\mathbb A}
\newcommand\rk{{\text{\rm rank}}\,}
\newcommand\calI{{\mathcal I}}
\newcommand\bfi{{\bf i}}
\newcommand\bfj{{\bf j}}
\newcommand\bfk{{\bf k}}
\newcommand\bfl{{\bf l}}
\newcommand\bfzero{{\bf 0}}
\newcommand\bfone{{\bf 1}}
\newcommand\calc{{\mathcal C}}
\newcommand\calm{{\mathcal M}}
\newcommand\calb{{\mathcal B}}
\newcommand\call{{\mathcal L}}
\newcommand\sr{{\text{\rm SR}}\,}
\newcommand\conv{{\text{\rm conv}}\,}
\newcommand\bil{{\mathcal B}{\mathcal L}}
\newcommand\lin{{\mathcal L}}
\newcommand\pr{\prime}
\newcommand\rana{\rangle^\square}
\newcommand\ranb{\rangle^\diamond}
\newcommand\e\varepsilon
\newcommand\fl{{\text{\sf fl}}}
\newcommand\cals{{\mathcal S}}
\newcommand\invol{\dagger}
\newcommand\pron{{\color{blue}\mathcal P_n^{\sf e}}}
\newcommand\prom{{\color{blue}\mathcal P_m^{\sf e}}}

\title{Choi matrices revisited. III}

\author{Kyung Hoon Han and Seung-Hyeok Kye}
\address{Kyung Hoon Han, Department of Data Science, The University of Suwon, Gyeonggi-do 445-743, Korea}
\email{kyunghoon.han at gmail.com}
\address{Seung-Hyeok Kye, Department of Mathematics and Institute of Mathematics, Seoul National University, Seoul 151-742, Korea}
\email{kye at snu.ac.kr}

\keywords{Choi matrices, bilinear pairing, duality, $k$-superpositive maps, $k$-positive maps, Schmidt numbers}
\subjclass{15A30, 81P15, 46L05, 46L07}
\thanks{partially supported by NRF-2020R1A2C1A01004587, Korea}

\begin{abstract}
We look for all linear isomorphisms from the mapping spaces onto the tensor products of matrices which send
$k$-superpositive maps onto unnormalized bi-partite states of Schmidt numbers less than or equal to $k$.
They also send $k$-positive maps onto $k$-block-positive matrices.
We also look for all the bilinear pairings between the mapping spaces and tensor products of matrices
which retain the usual duality between $k$-positivity and Schmidt numbers $\le k$.
They also retain the duality between $k$-superpositivity and $k$-block-positivity.
\end{abstract}
\maketitle

\section{Introduction}

Choi matrices \cite{choi75-10} and bilinear pairings between linear mapping spaces $\call(M_m,M_n)$
on matrix algebras and tensor products $M_m\ot M_n$ of them have been playing
fundamental roles in current quantum information theory since its beginning,
as we see in the work of Woronowicz \cite{woronowicz}
on the decomposability of positive maps on low dimensional matrix algebras, as well as
Horodecki's separability criterion \cite{horo-1} by positive maps.

Choi matrices give rise to the correspondences
$$
\phi\mapsto \choi_\phi=\sum_{i,j}e_{ij}\ot \phi(e_{ij}): \call(M_m,M_n)\to M_m\ot M_n,
$$
with the usual matrix units $\{e_{ij}\}$, between the convex cones
$\superpos_k$ of all $k$-superpositive maps \cite{ssz} and the
convex cones $\cals_k$ \cite{eom-kye} consisting of unnormalized
bi-partite states whose Schmidt numbers \cite{terhal-sghmidt} are no
greater than $k$. Recall that a state is called separable when it
belongs to $\cals_1$, and entangled otherwise. Recall also that a
linear map is $1$-superpositive \cite{ando-04} if and only if it is
entanglement breaking \cite{{hsrus},{cw-EB}}. The convex cones
$\mathbb P_k$ of all $k$-positive maps \cite{stine} also correspond
to the convex cones $\blockpos_k$ of all $k$-block-positive matrices
\cite{jam_72} through Choi matrices. When $k$ is the minimum of the
sizes of matrices in the domains and the ranges, both results
recover the original work of Choi \cite{choi75-10} for the
correspondence between completely positive maps and positive
(semi-definite) matrices, together with the Kraus decomposition
\cite{kraus-71} of completely positive maps. See
\cite{{kye_comp-ten},{kye_lec_note}} for surveys on the topics. We
summarize in the following diagram:
\begin{equation}\label{diagram}
\xymatrix{
\call(M_m,M_n): & \superpos_1  \ar@{}[r]|-*[@]{\subset}
& \superpos_k \ar@{}[r]|-*[@]{\subset} \ar[d] \ar@{.>}[rrd]^{\rm dual}
& \cp \ar@{}[r]|-*[@]{\subset} &\mathbb P_k \ar@{}[r]|-*[@]{\subset} \ar@{.>}[lld] \ar[d] &\mathbb P_1 \\
M_m\ot M_n: & {\mathcal S}_1  \ar@{}[r]|-*[@]{\subset}
&{\mathcal S}_k \ar@{}[r]|-*[@]{\subset} \ar[u] \ar@{.>}[rru]
&\mathcal P \ar@{}[r]|-*[@]{\subset} &\blockpos_k \ar@{}[r]|-*[@]{\subset} \ar[u] \ar@{.>}[llu] &\blockpos_1
}
\end{equation}

In the previous papers \cite{{kye_Choi_matrix},{han_kye_choi_2}}
under the same title, we have shown that a linear isomorphism from
$\call(M_m,M_n)$ onto $M_m\ot M_n$ is of the form $\phi\mapsto
\sum_k e_k\ot\phi( f_k)$ for bases $\{e_k\}$ and $\{ f_k\}$ of
$M_m$ if and only if it can be written by
$$
\phi\mapsto\choi_{\phi\circ\sigma}=\sum_{i,j}e_{ij}\ot \phi(\sigma(e_{ij}))
$$
for a linear isomorphism $\sigma$ on $M_m$, and this isomorphism retains all the vertical
correspondences in the diagram (\ref{diagram}) if and only if $\sigma=\ad_s$ with a nonsingular $s\in M_m$, where the map $\ad_s$ is defined by
$\ad_s(x)=s^*xs$. See also \cite{Paulsen_Shultz} and \cite{kye_comp-ten} for related preceding works.
The matrices $s$ in these variants of Choi matrix play the roles of separating and cyclic vectors
when we consider infinite dimensional analogues of Choi matrices in the recent work \cite{HKS_inf_Choi}.

The first purpose of this note is to consider all possible linear isomorphisms
from the mapping space $\call(M_m,M_n)$ onto the tensor product $M_m\ot M_n$,
and characterize those retaining all the correspondences in the diagram (\ref{diagram}).
Note that all isomorphisms from $\call(M_m,M_n)$ to $M_m\ot M_n$ are of the form
$$
\phi \mapsto \choi^\Theta_\phi:=\Theta(\choi_\phi)
$$
for linear isomorphisms $\Theta:M_m\ot M_n\to M_m\ot M_n$,
and so our task is to look for $\Theta$ which preserves all the convex cones in the bottom row of (\ref{diagram}).
It turns out that the variant $\choi_{\phi\circ\sigma}$ considered in \cite{han_kye_choi_2} is nothing but $\choi^{\sigma^*\ot\id}_\phi$
with the above notation.

Choi matrices are also important to consider the duality described in the diagram (\ref{diagram}).
For $X=\call(M_m,M_n)$ and $Y=M_m\ot M_n$, we begin with the bilinear pairing
\begin{equation}\label{bilin-dir}
\lan\phi,x\ot y\ran_{X,Y}:=\lan \phi(x),y\ran_{M_n},\qquad \phi\in\call(M_m,M_n),\ x\in M_m,\ y\in M_n,
\end{equation}
which is determined by a bilinear pairing on the range. Then all possible bilinear pairings between $\call(M_m,M_n)$ and $M_m\ot M_n$ are of the form
$$
\lan\phi,z\ran_\Theta=\lan\phi,\Theta^{-1}(z)\ran_{X,Y}, \qquad \phi\in \call(M_m,M_n),\ z\in M_m\ot M_n,
$$
for linear isomorphisms $\Theta:M_m\ot M_n\to M_m\ot M_n$.
The second purpose of this note is to look for $\Theta$ with which the bilinear pairing $\lan\ ,\ \ran_\Theta$ retains all the dualities in (\ref{diagram}).

We show that $\phi\mapsto\choi^\Theta_\phi$ retains the vertical correspondence
between $\superpos_k$ and $\cals_k$  in (\ref{diagram})
if and only if $\lan\ ,\ \ran_\Theta$ retains the duality between $\mathbb P_k$ and $\cals_k$  in (\ref{diagram}) if and only if
$\Theta(\cals_k)=\cals_k$.
When $k=m\meet n$, the minimum of $m$ and $n$, $\cals_{m\meet n}$ is nothing but the convex cone of all
positive matrices. We use the results  on the positivity preserving linear maps \cite{{schneider},{molnar},{semrl_souror}}
together with Schmidt rank $k$ non-decreasing linear maps \cite{chan_lim},
to show that $\Theta(\cals_k)=\cals_k$ for every $k=1,2,\dots, m\meet n$
if and only if
$\Theta$ is one of the following
\begin{equation}\label{exam}
\ad_s\ot \ad_t,\qquad \ttt_m\ot \ttt_n,\qquad \fl\quad {\rm when}\ m=n,
\end{equation}
together with their composition, with nonsingular $s\in M_m$ and $t\in M_n$,
where $\ttt_m$ and $\fl$ denote the transpose map and the flip operation given  by $\fl (a\ot b)=b\ot a$, respectively.
These maps also satisfy $\Theta(\blockpos_k)=\blockpos_k$ for every $k=1,\dots, m\meet n$.

We will begin with linear isomorphisms and bilinear pairings in the level of $*$-vector spaces, which are vector spaces over the complex field
with conjugate linear involutions $x\mapsto x^*$. See \cite{choi-effros, Paulsen_Tomforde}.
For a given $*$-vector space $X$, the set of all Hermitian elements $x\in X$ satisfying $x^*=x$ will be denoted by $X_h$, which is
a real vector space. Recall that every element $x \in X$ can be written uniquely as
\begin{equation}\label{decom}
x = x_1 + {\rm i} x_2, \qquad x_1, x_2 \in X_h.
\end{equation}
In fact, we have $x_1 = \frac 12 (x + x^*)$ and $x_2 = \frac 1{2{\rm i}}(x - x^*)$.

In the next section, we clarify the relations between linear isomorphisms and bilinear pairings in terms of dual cones, and provide general principles
to answer our questions.
After we discuss the mapping spaces and tensor products of $*$-vector spaces in Section 3, we
restrict ourselves to the cases of matrix algebras
in Section 4, to get the above mentioned results.
We finish the paper to discuss Choi matrices and bilinear pairings appearing in the literature.
In the Appendix, we also discuss the problem to characterize isomorphisms $\Theta$ on $M_m\ot M_n$
satisfying $\Theta(\cals_k)=\cals_k$ for a fixed $k$ with $1\le k<m\meet n$.

\section{Linear isomorphisms and Bilinear pairings}

Suppose that $X$ and $Y$ are $*$-vector spaces. We say that a
bilinear pairing $\lan\ ,\ \ran_{X,Y}$ is {\sl Hermiticity
preserving} when $\lan x^*,y^* \ran_{X,Y} = \overline{\lan x,y
\ran}_{X,Y}$ for every $x \in X$ and $y \in Y$. This happens if and
only if its restriction on $X_h \times Y_h$ is real valued by
(\ref{decom}). Therefore, a Hermiticity preserving bilinear pairing
between $*$-vector spaces $X$ and $Y$ gives rise to an $\mathbb R$-bilinear pairing between real vector spaces
$X_h$ and $Y_h$ by restriction. Conversely, an $\mathbb R$-bilinear pairing on $X_h\times Y_h$ extends uniquely to the
Hermiticity preserving bilinear pairing on $X\times Y$ by
$$
\lan x_1+{\rm i}x_2, y_1+{\rm i}y_2\ran=
\lan x_1,y_1\ran+{\rm i}\lan x_1,y_2\ran+{\rm i}\lan x_2,y_1\ran-\lan x_2,y_2\ran,
$$
using the identity (\ref{decom}).
We also note that non-degeneracy of a Hermiticity preserving bilinear pairing on $X \times Y$ is equivalent to
that of its restriction on $X_h \times Y_h$ by (\ref{decom}) again.

Suppose that $X$ and $Y$ are finite dimensional $*$-vector spaces sharing the same dimension, and
$\lan\ ,\ \ran_{X,Y}$ is a non-degenerate Hermiticity preserving bilinear pairing on $X\times Y$.
For a subset $S$ of $X_h$, we define the {\sl dual cone}
$S^\circ$ of $S$ with respect to $\lan\ ,\ \ran_{X,Y}$ by
$$
S^\circ=\{y\in Y_h: \lan x,y\ran_{X,Y} \ge 0\ {\text{\rm for every}}\ x\in S\}.
$$
For a subset $T$ of $Y_h$, the dual cone ${}^\circ T$ is also defined by
$$
{}^\circ T=\{x\in X_h: \lan x,y\ran_{X,Y} \ge 0\ {\text{\rm for every}}\ y\in T\}.
$$
It is well known that ${}^\circ S^\circ={}^\circ(S^\circ)$ is the smallest closed convex cone containing $S$,
and so ${}^\circ S^\circ=S$ when $S$ is a closed convex cone of $X_h$.

In this section, we begin with two non-degenerate Hermiticity preserving bilinear pairings $\lan\ ,\ \ran_{X,Y}$ on $X\times Y$ and $\lan\ ,\ \ran_Y$ on $Y\times Y$.
For a linear isomorphism $\Gamma:X\to Y$, we may consider the bilinear pairing
$$
(x,y)\mapsto \lan \Gamma(x),y\ran_Y,\qquad x\in X, y\in Y
$$
on $X\times Y$. It is easy to see that this bilinear pairing is Hermiticity preserving if and only if $\Gamma$ is {\sl Hermiticity preserving},
that is, $\Gamma(x^*)=\Gamma(x)^*$ for every $x\in X$.
We compare two bilinear pairings  $\lan\ ,\ \ran_{X,Y}$ and $\lan \Gamma(\ ),\ \ran_Y$, in terms of the dual cones.
We begin with the following technical lemma.

\begin{lemma}\label{basic-lem}
Suppose that $\Lambda$ is a linear functional on a vector space $Y$, and $\alpha$ is a scalar-valued function on $Y\setminus\{0\}$.
If $y\mapsto \alpha(y)\Lambda(y)$ extends to a linear functional on $Y$, then $\alpha$ is a constant function.
\end{lemma}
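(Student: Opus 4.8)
The plan is to name the extension: let $\mu$ denote the linear functional on $Y$ that agrees with $y\mapsto\alpha(y)\Lambda(y)$ on $Y\setminus\{0\}$, so that $\mu(y)=\alpha(y)\Lambda(y)$ for every nonzero $y$. First I would dispose of the degenerate case $\Lambda=0$: there the product is identically zero and $\alpha$ is entirely unconstrained, so the statement must be read with $\Lambda\ne 0$ (equivalently, the conclusion concerns the values of $\alpha$ on the set $\{y:\Lambda(y)\ne 0\}$, which is exactly where $\alpha$ enters the product). Assuming $\Lambda\ne0$, the whole argument reduces to showing that $\mu$ is a scalar multiple of $\Lambda$.

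The key step is to evaluate $\mu$ on the kernel of $\Lambda$. For any nonzero $y\in\ker\Lambda$ we have $\mu(y)=\alpha(y)\Lambda(y)=\alpha(y)\cdot 0=0$, and of course $\mu(0)=0$, so $\mu$ vanishes identically on $\ker\Lambda$. Since $\Lambda\ne0$, its kernel is a hyperplane, i.e.\ a subspace of codimension one, and a linear functional vanishing on $\ker\Lambda$ must be proportional to $\Lambda$: from $\ker\Lambda\subseteq\ker\mu$ one obtains $\mu=c\,\Lambda$ for a unique scalar $c$ (with $c=0$ precisely when $\mu=0$). This is the only place where linearity of the extension is used in an essential way, and it is the crux of the argument; everything else is bookkeeping.

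Finally I would read off $\alpha$. For every $y$ with $\Lambda(y)\ne0$ we may divide to obtain
$$
\alpha(y)=\frac{\mu(y)}{\Lambda(y)}=\frac{c\,\Lambda(y)}{\Lambda(y)}=c,
$$
so $\alpha$ equals the constant $c$ on all of $\{y:\Lambda(y)\ne0\}$, which is the assertion. I expect the only genuine subtlety---rather than a real obstacle---to be the careful handling of the kernel: the value of $\alpha$ on $\ker\Lambda\setminus\{0\}$ never appears in the product $\alpha(y)\Lambda(y)$ and so is not pinned down, which is why one must either assume $\Lambda\ne0$ or phrase the conclusion as constancy off the kernel. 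An equivalent, more hands-on route avoids naming $\ker\Lambda$ explicitly: scaling gives $\alpha(\lambda y)=\alpha(y)$ for scalars $\lambda\ne0$ whenever $\Lambda(y)\ne0$, and for two such vectors normalized to $\Lambda(y_1)=\Lambda(y_2)=1$ the difference $y_2-y_1$ lies in $\ker\Lambda$, so additivity of $\mu$ forces $\alpha(y_2)=\alpha(y_1)$; but the hyperplane formulation above is the cleaner one.
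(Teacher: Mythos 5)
Your proof is correct, and it reaches the right conclusion by a route that is cleaner than, though closely related to, the paper's. The paper works with $\alpha$ directly: additivity of the extension gives $\alpha(y_0+y_1)=\alpha(y_1)$ for $y_0\in\ker\Lambda$ and $y_1\notin\ker\Lambda$, homogeneity gives $\alpha(ky_1)=\alpha(y_1)$ for $k\neq 0$, and an arbitrary $y$ is then decomposed as $ky_1+(y-ky_1)$ with $k=\Lambda(y)/\Lambda(y_1)$. That is precisely the ``more hands-on route'' you sketch in your closing sentence. Your preferred route instead names the extension $\mu$, notes $\ker\Lambda\subseteq\ker\mu$, invokes the standard fact that a linear functional vanishing on $\ker\Lambda$ is a scalar multiple $c\Lambda$, and divides; this isolates the one place where linearity is used and packages the bookkeeping into a classical linear-algebra fact. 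The more substantial difference is that you correctly identify that the statement, read literally, is too strong: $\alpha$ is never constrained on $\ker\Lambda\setminus\{0\}$ (on $Y=\mathbb{C}^2$ with $\Lambda(z_1,z_2)=z_1$, set $\alpha=1$ where $z_1\neq 0$ and $\alpha=2$ on the rest of $Y\setminus\{0\}$; then $\alpha(y)\Lambda(y)=\Lambda(y)$ for all $y\neq 0$, which extends to the linear functional $\Lambda$, yet $\alpha$ is not constant), and $\alpha$ is entirely arbitrary when $\Lambda=0$. The paper's own proof in fact only establishes your corrected conclusion---constancy on $\{y:\Lambda(y)\neq 0\}$---because its final step $\alpha(ky_1+(y-ky_1))=\alpha(ky_1)$ requires $ky_1\notin\ker\Lambda$, i.e.\ $\Lambda(y)\neq 0$. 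This weaker form still suffices for the proposition in which the lemma is invoked, but only together with a supplementary observation the paper leaves implicit: there one has $\lan x,y\ran_\sigma=\alpha(y)\lan x,y\ran_\pi$ with the \emph{same} $\alpha$ for every $x\in X_h$, and for nonzero $y_1,y_2\in Y_h$ non-degeneracy provides a single $x$ with $\lan x,y_1\ran_\pi\neq 0$ and $\lan x,y_2\ran_\pi\neq 0$ (one of $x_1$, $x_2$, $x_1+x_2$ works, where $\lan x_i,y_i\ran_\pi\neq 0$), whence $\alpha(y_1)=\alpha(y_2)$ and constancy off each individual kernel upgrades to constancy on all of $Y_h\setminus\{0\}$.
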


\begin{proof}
Take $y_0\in\ker\Lambda$ and $y_1\notin\ker\Lambda$ in $Y$. Then we have
$$
\alpha(y_1)\Lambda(y_1)=\alpha(y_0)\Lambda(y_0)+\alpha(y_1)\Lambda(y_1)=\alpha(y_0+y_1)\Lambda(y_0+y_1)
=\alpha(y_0+y_1)\Lambda(y_1),
$$
which implies $\alpha(y_1)=\alpha(y_0+y_1)$. Furthermore, we have
$$
k \alpha(y_1)\Lambda(y_1)=\alpha(k y_1)\Lambda(k y_1)=k \alpha(k y_1)\Lambda(y_1),
$$
and $\alpha(k y_1)=\alpha(y_1)$ whenever $k\neq 0$. We fix $y_1\notin\ker\Lambda$ in $Y$. For an arbitrary $y\in Y$, put $k=\frac{\Lambda(y)}{\Lambda(y_1)}$.
Then we have $y-ky_1\in\ker\Lambda$, and so
$$
\alpha(y)=\alpha(ky_1+(y-ky_1))=\alpha(ky_1)=\alpha(y_1),
$$
as it was required.
\end{proof}

\begin{proposition}
Suppose that $X$ and $Y$ are finite dimensional $*$-vector spaces
sharing the same dimension. For non-degenerate Hermiticity preserving bilinear
pairings $\lan\ ,\ \ran_\pi$ and $\lan\ ,\ \ran_\sigma$ on $X\times
Y$, the following are equivalent:
\begin{enumerate}
\item[{\rm (i)}]
$\lan\ ,\ \ran_\pi=\lan\ ,\ \ran_\sigma$ up to multiplication by a
positive number,
\item[{\rm (ii)}]
$C^{\circ_\pi}=C^{\circ_\sigma}$ for every closed convex cone $C$ of
$X_h$,
\item[{\rm (iii)}]
${}^{\circ_\pi}D= {}^{\circ_\sigma}D$ for every closed convex cone $D$ of
$Y_h$.
\end{enumerate}
\end{proposition}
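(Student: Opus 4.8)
The plan is to prove the cycle of implications (i) $\Rightarrow$ (ii) $\Rightarrow$ (i) together with (i) $\Rightarrow$ (iii) $\Rightarrow$ (i), observing that (ii) and (iii) play completely symmetric roles once the roles of $X$ and $Y$ are interchanged. The implications out of (i) are immediate: if $\langle\ ,\ \rangle_\pi = c\,\langle\ ,\ \rangle_\sigma$ for some $c>0$, then for every $x,y$ the quantity $\langle x,y\rangle_\pi$ has the same sign as $\langle x,y\rangle_\sigma$, so for any subset $S\subseteq X_h$ the defining inequalities of $C^{\circ_\pi}$ and $C^{\circ_\sigma}$ coincide, giving (ii); the analogous computation on the $Y$ side gives (iii). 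All of the content therefore lies in recovering the positive scalar from a hypothesis phrased purely in terms of dual cones.

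For (ii) $\Rightarrow$ (i), I would first pass to the non-degenerate $\mathbb R$-bilinear restrictions of both pairings on $X_h\times Y_h$ supplied by the discussion following (\ref{decom}), since by the unique Hermiticity preserving extension it suffices to prove proportionality there. The key device is to test the hypothesis (ii) on the simplest possible closed convex cones, the rays $C=\{tx:t\ge 0\}$ for a fixed $x\in X_h\setminus\{0\}$. For such a ray one computes $C^{\circ_\pi}=\{y\in Y_h:\langle x,y\rangle_\pi\ge 0\}$ and likewise for $\sigma$, so (ii) forces the two closed half-spaces of $Y_h$ cut out by the linear functionals $\langle x,\cdot\,\rangle_\pi$ and $\langle x,\cdot\,\rangle_\sigma$ to coincide. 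Both functionals are nonzero by non-degeneracy, so their common boundary is their common kernel; two functionals with the same kernel are proportional, and the scalar must be positive because it is the half-spaces, not their complements, that agree. This produces, for each $x\neq 0$, a number $\alpha(x)>0$ with $\langle x,y\rangle_\sigma=\alpha(x)\,\langle x,y\rangle_\pi$ for all $y\in Y_h$.

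It then remains to upgrade this pointwise proportionality to a single global constant, and this is exactly the step I expect to be the crux, handled by Lemma \ref{basic-lem} applied with $X_h$ in the role of the ambient space. Fixing any $y_*\in Y_h\setminus\{0\}$, non-degeneracy of $\langle\ ,\ \rangle_\pi$ makes $\Lambda:=\langle\,\cdot\,,y_*\rangle_\pi$ a nonzero linear functional on $X_h$, while $x\mapsto\alpha(x)\Lambda(x)=\langle x,y_*\rangle_\sigma$ is again linear in $x$; the lemma then forces $\alpha$ to be a constant $c>0$, so $\langle\ ,\ \rangle_\sigma=c\,\langle\ ,\ \rangle_\pi$ on $X_h\times Y_h$ and hence, after the Hermiticity preserving extension, on all of $X\times Y$, which is (i). The implication (iii) $\Rightarrow$ (i) follows verbatim after exchanging $X$ and $Y$ and testing the rays in $Y_h$. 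The genuine subtlety to flag is precisely this passage from $\alpha(x)$ to a constant: the naive worry that the proportionality factor might depend on $x$ is excluded only by exploiting linearity in $x$ of \emph{both} pairings simultaneously, which is what the lemma packages. The degenerate case $\dim X_h=1$, in which Lemma \ref{basic-lem} does not apply, is disposed of directly, since there any two non-degenerate pairings are automatically proportional and testing (ii) on a single ray pins down the sign of the factor.
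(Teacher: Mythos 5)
Your proposal is correct and matches the paper's strategy in its essentials: in both, the crux is to test the dual-cone hypothesis on single rays, extract a pointwise proportionality factor between the two pairings, and then apply Lemma~\ref{basic-lem} (with a nonzero functional supplied by non-degeneracy) to force that factor to be a single positive constant, which then extends from $X_h\times Y_h$ to all of $X\times Y$ by (\ref{decom}). The only differences are organizational: the paper proves the cycle (i)~$\Rightarrow$~(ii)~$\Rightarrow$~(iii)~$\Rightarrow$~(i), disposing of (ii)~$\Rightarrow$~(iii) by the bipolar identity applied to $C={}^{\circ_\pi}D$ so that the ray argument is run only once, and it obtains the pointwise factor through a choice function $f$ with $\langle f(y),y\rangle_\pi\neq 0$ rather than through your equal-half-spaces/equal-kernels observation; both routes are sound, and your separate treatment of the case $\dim X_h=1$ is harmless even though the lemma's conclusion in fact still holds there.
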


\begin{proof}
The implication (i) $\Rightarrow$ (ii) is obvious. For (ii)
$\Rightarrow$ (iii), we take $C={}^{\circ_\pi}D$. Then, we have $D =
({}^{\circ_\pi}D)^{\circ_\pi} = C^{\circ_\pi} = C^{\circ_\sigma} =
({}^{\circ_\pi}D)^{\circ_\sigma}$, which implies ${}^{\circ_\sigma}D =
{}^{\circ_\pi}D$.

Now, we suppose that (iii) holds, and take nonzero  $y\in Y_h$. We
take the convex cone $D$ of $Y_h$ generated by $y$, that is,
$D=\{\lambda y : \lambda \ge 0\}$. By (iii), we have
$\lan x,y\ran_\pi \ge 0$ if and only if $\lan x,y\ran_\sigma \ge 0$,
which implies
$$
\lan x,y\ran_\pi = 0\ \Longleftrightarrow\ \lan x,y\ran_\sigma = 0,
$$
by applying both $x$ and $-x$. Take $x\in X_h$ and a function $f :
Y_h \setminus \{0\} \to X_h \setminus \{0\}$ satisfying $\lan f(y),
y \ran_\pi  \ne 0$ by non-degeneracy. By the relation $\lan
x-\frac{\lan x,y\ran_\pi}{\lan f(y),y\ran_\pi} f(y),y\ran_\pi=0$, we
have
$$
0=\left\lan x-\frac{\lan x,y\ran_\pi}{\lan f(y),y\ran_\pi} f(y),\
y\right\ran_\sigma =\lan x,y\ran_\sigma - \lan x,y\ran_\pi
\frac{\lan f(y),y\ran_\sigma}{\lan f(y),y\ran_\pi}.
$$
By lemma \ref{basic-lem}, we see that $\frac{\lan
f(y),y\ran_\sigma}{\lan f(y),y\ran_\pi}$ is a positive constant
function on $Y_h\setminus\{0\}$. Let
$$
\lan x,y \ran_\pi= \lambda \lan x,y \ran_\sigma, \qquad x \in X_h,~
y \in Y_h
$$
for $\lambda>0$. By (\ref{decom}), it still holds for general $x \in
X$ and $y \in Y$.
\end{proof}

There are two ways to define bilinear pairings between
$X=\call(M_m,M_n)$ and $Y=M_m\ot M_n$; the one is to define directly
by (\ref{bilin-dir}), the other is to use the bilinear form on $Y$
through isomorphisms $\Gamma:X\to Y$.
$$
\xymatrix{X \times Y \ar[d]_{\Gamma \times {\rm id}_Y} \ar[drr]^{\lan\ ,\ \ran_{X,Y}} && \\
Y \times Y \ar[rr]_{\lan\ ,\ \ran_Y}  & & \mathbb C}
$$

\begin{corollary}\label{id-dual}
Let $X$ and $Y$ be $*$-vector spaces
with a Hermiticity preserving linear isomorphism $\Gamma:X\to Y$.
Suppose that $\lan\ ,\ \ran_{X,Y}$ and $\lan\ ,\ \ran_Y$ are
non-degenerate Hermiticity preserving bilinear pairings on $X\times Y$ and
$Y\times Y$, respectively. Then the following are equivalent:
\begin{enumerate}
\item[{\rm (i)}]
$\lan\ ,\ \ran_{X,Y}=\lan \Gamma(\ ),\ \ran_Y$ up to multiplication
by a positive number,
\item[{\rm (ii)}]
$C^{\circ_{X,Y}}=(\Gamma(C))^{\circ_Y}$ for every closed convex cone
$C$ of $X_h$,
\item[{\rm (iii)}]
$\Gamma({}^{\circ_{X,Y}}D)= {}^{\circ_Y}D$ for every closed convex cone
$D$ of $Y_h$.
\end{enumerate}
\end{corollary}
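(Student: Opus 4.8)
The plan is to reduce the whole statement to the Proposition just proved by pulling the pairing on $Y\times Y$ back to $X\times Y$ along $\Gamma$. Concretely, I would keep $\lan\ ,\ \ran_\pi:=\lan\ ,\ \ran_{X,Y}$ and introduce a second pairing on $X\times Y$ by $\lan x,y\ran_\sigma:=\lan \Gamma(x),y\ran_Y$. The first task is to check that $\lan\ ,\ \ran_\sigma$ meets the hypotheses of the Proposition: bilinearity is immediate from linearity of $\Gamma$; it is Hermiticity preserving precisely because $\Gamma$ is Hermiticity preserving, as recorded in the discussion preceding Lemma \ref{basic-lem}; and it is non-degenerate because $\Gamma$ is a bijection and $\lan\ ,\ \ran_Y$ is non-degenerate. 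Since $\Gamma$ is an isomorphism, $X$ and $Y$ share the same (finite) dimension, so the Proposition applies to the pair $\lan\ ,\ \ran_\pi$, $\lan\ ,\ \ran_\sigma$, and it remains only to translate its three conditions into those of the Corollary.

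The translation rests on two dual-cone identities. Since $\Gamma$ is Hermiticity preserving it restricts to a linear bijection $X_h\to Y_h$. For a closed convex cone $C\subseteq X_h$, unwinding the definitions gives
$$
C^{\circ_\sigma}=\{y\in Y_h:\lan \Gamma(x),y\ran_Y\ge 0\ \text{for all}\ x\in C\}=(\Gamma(C))^{\circ_Y},
$$
and dually, for a closed convex cone $D\subseteq Y_h$, the substitution $z=\Gamma(x)$ (legitimate because $\Gamma:X_h\to Y_h$ is a bijection) yields
$$
\Gamma({}^{\circ_\sigma}D)=\{z\in Y_h:\lan z,y\ran_Y\ge 0\ \text{for all}\ y\in D\}={}^{\circ_Y}D.
$$
With these in hand the equivalences follow mechanically. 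Condition (i) of the Corollary is literally condition (i) of the Proposition for the pair $\pi,\sigma$. Condition (ii) of the Proposition, namely $C^{\circ_\pi}=C^{\circ_\sigma}$ for every closed convex cone $C\subseteq X_h$, becomes $C^{\circ_{X,Y}}=(\Gamma(C))^{\circ_Y}$ by the first identity, i.e.\ condition (ii) of the Corollary. For (iii), injectivity of $\Gamma$ shows that ${}^{\circ_\pi}D={}^{\circ_\sigma}D$ holds if and only if $\Gamma({}^{\circ_{X,Y}}D)=\Gamma({}^{\circ_\sigma}D)$, and the second identity rewrites the right-hand side as ${}^{\circ_Y}D$; this is exactly condition (iii) of the Corollary. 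Invoking the Proposition's equivalence (i) $\Leftrightarrow$ (ii) $\Leftrightarrow$ (iii) then closes the argument.

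There is no deep obstacle here, as the statement is simply a transport of the Proposition along the isomorphism $\Gamma$. The only points demanding genuine care are the verification that $\lan\ ,\ \ran_\sigma$ remains non-degenerate and Hermiticity preserving, and the observation that $\Gamma$ restricts to a bijection $X_h\to Y_h$; it is this bijectivity that makes both dual-cone identities exact equalities rather than mere inclusions, and it is where the finite-dimensionality inherited from $\Gamma$ being an isomorphism of equal-dimensional spaces is quietly used when applying the Proposition.
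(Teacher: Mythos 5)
Your proof is correct and is exactly the argument the paper intends: the Corollary is stated without proof as an immediate consequence of the preceding Proposition, applied to the pair of pairings $\lan\ ,\ \ran_{X,Y}$ and $\lan \Gamma(\ ),\ \ran_Y$ (whose Hermiticity preservation and non-degeneracy follow from the properties of $\Gamma$, as the paper notes before Lemma \ref{basic-lem}), with your two dual-cone identities carrying out the translation of conditions (ii) and (iii).
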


Now, we fix a linear isomorphism $\Gamma:X\to Y$ and a non-degenerate bilinear pairing $\lan\ ,\ \ran_{X,Y}$ on $X\times Y$.
Then every linear isomorphism from $X$ onto $Y$ is of the form
\begin{equation}\label{another-iso}
\Gamma^\Theta:=\Theta\circ\Gamma:X\to Y,
\end{equation}
for a linear isomorphism $\Theta:Y\to Y$. Furthermore, every non-degenerate bilinear pairing on $X\times Y$ is of the form
\begin{equation}\label{another-bil}
\lan x,y\ran_\Theta=\lan x,\Theta^{-1}(y)\ran_{X,Y},\qquad x\in X,\ y\in Y,
\end{equation}
for a linear isomorphism $\Theta:Y\to Y$. To see this, we take bases $\{e_i\}$ and $\{f_i\}$ satisfying $\lan e_i,f_j\ran_{X,Y}=\delta_{ij}$.
Then every non-degenerate bilinear pairing $\lan\ ,\ \ran_{X,Y}'$ on $X\times Y$ is determined by
$\lan e_i,\tilde f_j\ran_{X,Y}'=\delta_{i,j}$ by taking another basis $\{\tilde f_j\}$ of $Y$ \cite[Proposition II.1]{han_kye_choi_2}. We define the linear isomorphism
$\Theta:Y\to Y$ by $\Theta(f_j)=\tilde f_j$. By $\lan e_i,\Theta(f_j)\ran_{X,Y}'=\delta_{i,j}=\lan e_i,f_j\ran_{X,Y}$, we have
the above relation.

The dual map $\Theta^*:Y\to Y$ is defined by $\lan \Theta(y_1),y_2\ran_Y = \lan y_1,\Theta^*(y_2)\ran_Y$ for $y_1, y_2 \in Y$.

\begin{proposition}\label{rel-bi-forms}
Suppose that a linear isomorphism $\Gamma:X\to Y$ and a non-degenerate bilinear pairing $\lan\ ,\ \ran_{X,Y}$ satisfy the relation
$$
\lan \Gamma(x),y\ran_Y = \lan x,y\ran_{X,Y},\qquad x\in X, y\in Y.
$$
For linear isomorphisms $\Theta_1, \Theta_2$ and $\Theta_3$ on $Y$,
the following are equivalent:
\begin{enumerate}
\item[{\rm (i)}]
$\lan \Gamma^{\Theta_2}(x),y\ran_{\Theta_1} = \lan x,y\ran_{\Theta_3}$ for every $x\in X$ and $y\in Y$,
\item[{\rm (ii)}]
$\Theta_1\circ(\Theta_2^*)^{-1}=\Theta_3$.
\end{enumerate}
\end{proposition}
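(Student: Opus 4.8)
The plan is to establish the equivalence by a direct computation that unwinds both sides of (i) using the definitions (\ref{another-iso}) and (\ref{another-bil}), moves one of the isomorphisms across the pairing by means of the dual map $\Theta_2^*$, and then invokes the non-degeneracy of $\lan\ ,\ \ran_{X,Y}$ to convert the resulting scalar identity into an operator identity on $Y$. Here $\lan\ ,\ \ran_{\Theta_1}$ on the left of (i) is read through the pairing on $Y\times Y$, since its first argument $\Gamma^{\Theta_2}(x)=\Theta_2(\Gamma(x))$ lives in $Y$; that is, $\lan z,y\ran_{\Theta_1}=\lan z,\Theta_1^{-1}(y)\ran_Y$ for $z,y\in Y$, which is consistent with (\ref{another-bil}) via the hypothesis $\lan\Gamma(x),y\ran_Y=\lan x,y\ran_{X,Y}$.

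First I would expand the left-hand side of (i). Using (\ref{another-iso}), the definition of $\lan\ ,\ \ran_{\Theta_1}$, the defining property $\lan\Theta_2(y_1),y_2\ran_Y=\lan y_1,\Theta_2^*(y_2)\ran_Y$ of the dual map, and finally the hypothesis relating $\lan\ ,\ \ran_Y$ to $\lan\ ,\ \ran_{X,Y}$, I obtain
$$
\lan \Gamma^{\Theta_2}(x),y\ran_{\Theta_1}
=\lan \Theta_2(\Gamma(x)),\Theta_1^{-1}(y)\ran_Y
=\lan \Gamma(x),\Theta_2^*(\Theta_1^{-1}(y))\ran_Y
=\lan x,\Theta_2^*(\Theta_1^{-1}(y))\ran_{X,Y}.
$$
On the other hand, (\ref{another-bil}) gives $\lan x,y\ran_{\Theta_3}=\lan x,\Theta_3^{-1}(y)\ran_{X,Y}$ directly.

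Comparing the two, condition (i) is equivalent to the identity $\lan x,\Theta_2^*(\Theta_1^{-1}(y))\ran_{X,Y}=\lan x,\Theta_3^{-1}(y)\ran_{X,Y}$ holding for all $x\in X$ and $y\in Y$. Since $\lan\ ,\ \ran_{X,Y}$ is non-degenerate, fixing $y$ and letting $x$ range over $X$ forces $\Theta_2^*(\Theta_1^{-1}(y))=\Theta_3^{-1}(y)$ for every $y$, i.e. $\Theta_2^*\circ\Theta_1^{-1}=\Theta_3^{-1}$ as linear isomorphisms on $Y$. Taking inverses of both sides yields the equivalent form $\Theta_1\circ(\Theta_2^*)^{-1}=\Theta_3$, which is (ii); conversely, reading the same chain of equalities backwards recovers (i) from (ii). The computation is routine, so the only points requiring care are the interpretation of $\lan\ ,\ \ran_{\Theta_1}$ when its first slot carries a vector of $Y$ rather than of $X$, and the clean application of non-degeneracy to pass from the scalar identity to the operator identity $\Theta_2^*\circ\Theta_1^{-1}=\Theta_3^{-1}$.
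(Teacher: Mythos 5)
Your proof is correct and takes essentially the same route as the paper's: the identical three-step expansion of $\lan \Gamma^{\Theta_2}(x),y\ran_{\Theta_1}$ into $\lan x,\Theta_2^*(\Theta_1^{-1}(y))\ran_{X,Y}$ via the dual map, followed by comparison with $\lan x,\Theta_3^{-1}(y)\ran_{X,Y}$. Your explicit remarks about reading $\lan\ ,\ \ran_{\Theta_1}$ on $Y\times Y$ and about using non-degeneracy to pass from the scalar identity to $\Theta_2^*\circ\Theta_1^{-1}=\Theta_3^{-1}$ simply spell out steps the paper leaves implicit.
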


\begin{proof}
For $x\in X$ and $y\in Y$, we have
$$
\begin{aligned}
\lan \Gamma^{\Theta_2}(x),y\ran_{\Theta_1}
&= \lan\Theta_2(\Gamma(x)),\Theta_1^{-1}(y)\ran_Y\\
&= \lan\Gamma(x),\Theta_2^*(\Theta_1^{-1}(y))\ran_Y\\
&=\lan x,\Theta_2^*(\Theta_1^{-1}(y))\ran_{X,Y}.
\end{aligned}
$$
On the other hand, we have $\lan x,y\ran_{\Theta_3}=\lan x,\Theta_3^{-1}(y)\ran_{X,Y}$. Therefore, we see that (i) holds if and only if $\Theta_2^*\circ\Theta_1^{-1} = \Theta_3^{-1}$
holds if and only if (ii) holds.
\end{proof}

Now, we fix a non-degenerate Hermiticity preserving bilinear pairing $\lan\ ,\
\ran_{X,Y}$ on $X\times Y$, and suppose that $\lan\ ,\ \ran_\pi$ is
another Hermiticity preserving bilinear pairing on $X\times Y$. For a closed
convex cone $C$ of $X_h$, we see that
$C^{\circ_\pi}=C^{\circ_{X,Y}}$ if and only if
$C={}^{\circ_\pi}(C^{\circ_{X,Y}})$ if and only if
${}^{\circ_{X,Y}}(C^{\circ_{X,Y}})={}^{\circ_\pi}(C^{\circ_{X,Y}})$,
that is, the dual of $C$ with respect to $\lan\ ,\ \ran_\pi$ in place of
$\lan\ ,\ \ran_{X,Y}$ does not change if and only if the same is true for $C^{\circ_{X,Y}}$.
When this is the case, we say that
$\lan\ ,\ \ran_\pi$ {\sl retains the duality between $C$ and
$C^{\circ_{X,Y}}$}. It is easily seen that the bilinear pairing $\lan\
,\ \ran_\Theta$ in (\ref{another-bil}) is Hermiticity preserving if and only if
$\Theta:Y\to Y$ is Hermiticity preserving.

We also fix a linear isomorphism $\Gamma:X\to Y$, and suppose that $\Theta$ is an isomorphism on $Y$. We say that the isomorphism $\Gamma^\Theta$ in (\ref{another-iso})
{\sl retains the correspondence between $C\subset X$ and $\Gamma(C)\subset Y$} when $\Gamma^\Theta(C)=\Gamma(C)$.

\begin{proposition}\label{basic}
Suppose that $\Gamma:X\to Y$ is a linear isomorphism and
$\lan\ ,\ \ran_{X,Y}$ is a non-degenerate Hermiticity preserving bilinear pairing on $X\times Y$.
For a Hermiticity preserving linear isomorphism $\Theta:Y\to Y$ and a closed convex cone $C$ of $Y_h$, the following are equivalent:
\begin{enumerate}
\item[{\rm (i)}]
$\Gamma^\Theta$ retains the correspondence between $\Gamma^{-1}(C)$ and $C$,
\item[{\rm (ii)}]
$\lan\ ,\ \ran_\Theta$ retains the duality between ${}^{\circ_{X,Y}}C$ and $C$,
\item[{\rm (iii)}]
$\Theta(C)=C$.
\end{enumerate}
\end{proposition}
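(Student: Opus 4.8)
The plan is to reduce all three conditions to the single equality $\Theta(C)=C$ by unwinding the definitions; the only non-formal ingredient is the bipolar theorem recorded above, and $\Gamma$ plays essentially no role.

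First I would settle (i) $\Leftrightarrow$ (iii). Because $\Gamma^\Theta=\Theta\circ\Gamma$ and $\Gamma$ is a bijection, one has $\Gamma^\Theta(\Gamma^{-1}(C))=\Theta\big(\Gamma(\Gamma^{-1}(C))\big)=\Theta(C)$, while $\Gamma(\Gamma^{-1}(C))=C$. By definition $\Gamma^\Theta$ retains the correspondence between $\Gamma^{-1}(C)$ and $C$ exactly when $\Gamma^\Theta(\Gamma^{-1}(C))=\Gamma(\Gamma^{-1}(C))$, and by the computation above this is literally $\Theta(C)=C$. No use of Hermiticity of $\Gamma$ is needed here.

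For (ii) $\Leftrightarrow$ (iii) the key step is the dual-cone identity $(C')^{\circ_\Theta}=\Theta(C)$, where $C':={}^{\circ_{X,Y}}C$ is a closed convex cone of $X_h$. I would first record that $\Theta$, being a Hermiticity preserving isomorphism, maps $Y_h$ bijectively onto $Y_h$ (as does $\Theta^{-1}$), so that $\lan\ ,\ \ran_\Theta$ is a non-degenerate Hermiticity preserving pairing and the dual-cone formalism applies. Then, using $\lan x,y\ran_\Theta=\lan x,\Theta^{-1}(y)\ran_{X,Y}$ and substituting $z=\Theta^{-1}(y)$, which ranges over all of $Y_h$ as $y$ does, I obtain
$$
(C')^{\circ_\Theta}=\{\,\Theta(z): z\in Y_h,\ \lan x,z\ran_{X,Y}\ge 0\ \text{for all}\ x\in C'\,\}=\Theta\big((C')^{\circ_{X,Y}}\big).
$$
The bipolar theorem, applied to the closed convex cone $C$ of $Y_h$, gives $(C')^{\circ_{X,Y}}=({}^{\circ_{X,Y}}C)^{\circ_{X,Y}}=C$, so that $(C')^{\circ_\Theta}=\Theta(C)$.

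Finally, by definition $\lan\ ,\ \ran_\Theta$ retains the duality between ${}^{\circ_{X,Y}}C$ and $C$ precisely when $(C')^{\circ_\Theta}=(C')^{\circ_{X,Y}}=C$; combined with the identity just derived this reads $\Theta(C)=C$, which is (iii). I do not expect a genuine obstacle, since the whole argument is bookkeeping with the definitions. The one point demanding care is the dual-cone computation: I must check that the substitution $z=\Theta^{-1}(y)$ keeps $z$ inside $Y_h$ — this is exactly where Hermiticity preservation of $\Theta$ enters — and I must invoke the bipolar identity for a cone living in $Y_h$, that is, the version symmetric to the one stated for $X_h$ above.
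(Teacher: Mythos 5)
Your proof is correct and takes essentially the same route as the paper's: (i) $\Leftrightarrow$ (iii) is the same trivial computation, and (ii) $\Leftrightarrow$ (iii) is the same substitution-plus-bipolar argument, only carried out on the mirror side of the duality — you compute the $\Theta$-dual of ${}^{\circ_{X,Y}}C$ inside $Y_h$ and obtain $\Theta(C)$ directly, whereas the paper computes the $\Theta$-dual of $C$ inside $X_h$, getting ${}^{\circ_\Theta}C={}^{\circ_{X,Y}}(\Theta^{-1}(C))$, and then concludes via injectivity of the dual-cone operation on closed convex cones. Since the paper's definition of \emph{retains the duality} already records these two formulations as equivalent, the two arguments have identical content.
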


\begin{proof}
The equivalence between (i) and (iii) is trivial.
We note that $x\in X_h$ belongs to ${}^{\circ_\Theta}C$ if and only if $\lan x,\Theta^{-1}(y)\ran_{X,Y}\ge 0$ for every $y\in C$ if and only if
$\lan x,y^\pr\ran_{X,Y}\ge 0$ for every $y^\pr\in\Theta^{-1}(C)$ if and only if $x\in {}^{\circ_{X,Y}}(\Theta^{-1}(C))$, and so we have
${}^{\circ_\Theta}C={}^{\circ_{X,Y}}(\Theta^{-1}(C))$. Therefore, we see that
$\lan\ ,\ \ran_\Theta$ retains the duality between ${}^{\circ_{X,Y}}C$ and $C$
if and only if ${}^{\circ_{X,Y}}C = {}^{\circ_\Theta}C$
if and only if ${}^{\circ_{X,Y}}C={}^{\circ_{X,Y}}(\Theta^{-1}(C))$
if and only if $C=\Theta^{-1}(C)$ if and only if
(iii) holds.
\end{proof}

If a closed convex cone $C$ spans $X_h$ and a linear isomorphism $\Theta$ satisfies (iii), then $\Theta$ is necessarily Hermiticity preserving by (\ref{decom}).

Suppose that $\lan\ ,\ \ran_X$ is a Hermiticity preserving bilinear form on $X$, and $\Theta:X\to X$ is a Hermiticity preserving linear map.
For an arbitrary subset $S$ in $X_h$ and $x\in X_h$, we have
$$
\begin{aligned}
x\in{}^{\circ_X}(\Theta^*(S))
&\Longleftrightarrow\ \lan x,\Theta^*(y)\ran_X\ge 0\ {\text{\rm for every}}\ y\in S\\
&\Longleftrightarrow\ \lan\Theta(x),y\ran_X\ge 0\ {\text{\rm for every}}\ y\in S\\
&\Longleftrightarrow\ \Theta(x)\in {}^{\circ_X}S\\
&\Longleftrightarrow\ x\in\Theta^{-1}({}^{\circ_X}S),
\end{aligned}
$$
and so we have
${}^{\circ_X}(\Theta^*(S))=\Theta^{-1}({}^{\circ_X}S)$.
When $S$ is a closed convex cone of $X_h$, we replace $S$ by $S^{\circ_X}$, to
get ${}^{\circ_X}(\Theta^*(S^{\circ_X}))=\Theta^{-1}(S)$, from which we also have
$$
\Theta^*(S^{\circ_X})=\Theta^{-1}(S)^{\circ_X}.
$$
If $\Theta$ is a bijection, then $\Theta(S)=S$ if and only if $S=\Theta^{-1}(S)$, and so we have the following:

\begin{proposition}\label{hvmmj}
Suppose that $\lan\ ,\ \ran_X$ is a Hermiticity preserving bilinear form on $X$,
and $\Theta:X\to X$ is a Hermiticity preserving linear isomorphism.
For a closed convex cone $C$ of $X_h$, we have $\Theta(C)=C$ if and only if $\Theta^*(C^{\circ_X})=C^{\circ_X}$.
\end{proposition}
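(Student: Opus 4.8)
The plan is to read the statement off from the displayed identity
$$
\Theta^*(S^{\circ_X})=\Theta^{-1}(S)^{\circ_X}
$$
established in the paragraph immediately preceding the proposition, specialized to the closed convex cone $S=C$. That identity already carries the content; what remains is to combine it with the bipolar relation ${}^{\circ_X}(D^{\circ_X})=D$ for closed convex cones $D$ and with the bijectivity of $\Theta$.

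Since $\Theta$ is a linear isomorphism, the condition $\Theta(C)=C$ is equivalent to $\Theta^{-1}(C)=C$. For the forward implication I would assume $\Theta^{-1}(C)=C$ and substitute into the identity above to obtain $\Theta^*(C^{\circ_X})=\Theta^{-1}(C)^{\circ_X}=C^{\circ_X}$, which is exactly what is wanted.

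For the converse, supposing $\Theta^*(C^{\circ_X})=C^{\circ_X}$, the same identity gives $\Theta^{-1}(C)^{\circ_X}=C^{\circ_X}$. The point to check is that $\Theta^{-1}(C)$ is again a closed convex cone, which holds because $\Theta^{-1}$ is a linear homeomorphism of the finite-dimensional space $X_h$; then $\Theta^{-1}(C)$ and $C$ are closed convex cones sharing the same dual cone, and applying ${}^{\circ_X}(\cdot)$ to both sides together with the bipolar relation yields $\Theta^{-1}(C)=C$, hence $\Theta(C)=C$. I expect no genuine obstacle here: the one delicate step is precisely this passage from equal dual cones back to equal cones, which rests on the bipolar theorem and thus on the non-degeneracy of $\lan\ ,\ \ran_X$ inherited from the standing assumptions.
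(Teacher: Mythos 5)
Your proof is correct and takes essentially the same route as the paper: the paper proves Proposition \ref{hvmmj} precisely by combining the identity $\Theta^*(S^{\circ_X})=\Theta^{-1}(S)^{\circ_X}$ from the preceding paragraph with the bipolar relation and the equivalence $\Theta(C)=C \Leftrightarrow \Theta^{-1}(C)=C$ for bijective $\Theta$. Your write-up merely makes explicit two steps the paper leaves implicit, namely that $\Theta^{-1}(C)$ is again a closed convex cone and that two closed convex cones with the same dual cone coincide.
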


\section{Hermiticity preserving bilinear pairings on mapping spaces and tensor products}

In this section, we consider  the case when $X$ and $Y$ are the mapping space $\call(V,W)$ and the tensor product $V\ot W$, respectively,
for given finite dimensional vector spaces $V$ and $W$.

We begin with fixed non-degenerate bilinear forms $\lan\ ,\ \ran_V$ and $\lan\ ,\ \ran_W$ on
finite dimensional vector spaces $V$ and $W$, respectively.
They give rise to the bilinear form
\begin{equation}\label{tensor-id}
\lan v_1\ot w_1, v_2 \ot w_2 \ran_{V\ot W}=\lan v_1 ,v_2\ran_V\lan w_1, w_2\ran_W,\qquad v_1, v_2\in V,\ w_1, w_2 \in W,
\end{equation}
on the space $V\ot W$.
We take bases $\{e_i\}$ and $\{f_i\}$ of $V$ satisfying $\lan e_i, f_j\ran_V=\delta_{ij}$.
In this section, we also consider the linear map $\Gamma : X \to Y$, which is defined by
\begin{equation}\label{choi-defi}
\Gamma : \phi\in\call(V,W) \mapsto \choi_\phi := \sum_i e_i\ot\phi(f_i)\in V\ot W.
\end{equation}
Then $\choi_\phi$, which plays the role of Choi matrices,
does not depend  on the choice of bases, but depends only on
the bilinear form $\lan\ ,\ \ran_V$ given by $\lan e_i, f_j\ran_V=\delta_{ij}$ on the domain space \cite{han_kye_choi_2}.

If $V$ and $W$ are $*$-vector spaces, then their tensor product $V \otimes W$  is also a $*$-vector space with the involution given  by
$$
(v \otimes w)^* = v^* \otimes w^*,\qquad v\in V,\ w\in W.
$$
It is known \cite{choi-effros} that
\begin{equation}\label{hermitian}
(V \otimes W)_h = V_h \otimes_{\mathbb R} W_h.
\end{equation}
When $\lan\ ,\ \ran_V$ and $\lan\ ,\ \ran_W$ are Hermiticity preserving, so is
the bilinear form $\lan\ ,\ \ran_{V\ot W}$ in (\ref{tensor-id}).
Conversely, if the bilinear form $\lan\ ,\ \ran_{V\ot W}$ is
Hermiticity preserving, then so are both $\lan\ ,\ \ran_V$ and $\lan\ ,\ \ran_W$
up to complex scalar multiples. To see that $\lan\ ,\ \ran_V$
(respectively $\lan\ ,\ \ran_W$) is Hermiticity preserving up to scalar
multiples, we may fix Hermitian $w_1$ and $w_2$
(respectively $v_1$ and $v_2$) in (\ref{tensor-id}).

When $V$ and $W$ are $*$-vector spaces, the mapping space
$\call(V,W)$ is also a $*$-vector space with the involution
$\phi\mapsto\phi^\invol$ defined by
$$
\phi^\invol(v)=\phi(v^*)^*,\qquad v\in V.
$$
Then, $\phi$ is Hermiticity preserving if and only if it is Hermitian in the $*$-vector space $\call(V,W)$.
In particular, the dual spaces $V^{\rm d}$ and $W^{\rm d}$ are also $*$-vector spaces.
For a non-degenerate bilinear pairing $\lan\ ,\ \ran_{V,W}$ on $V\times W$, we recall that the duality map
$D:V\to W^{\rm d}$  is defined by
$D(v)(w)=\lan v,w\ran_{V,W}$. We have
$$
D(v^*)(w) = \lan v^*,w \ran_{V,W}, \qquad
D(v)^\invol(w)=\overline{D(v)(w^*)}=\overline{\lan v,w^*\ran}_{V,W},
$$
and so we see that a bilinear pairing $\lan\ ,\ \ran_{V,W}$ on $V\times W$ is Hermiticity preserving if and only if its duality map $D:V\to W^{\rm d}$ is Hermiticity preserving.

We call a basis of a $*$-vector space is {\sl Hermitian} if it consists of Hermitian vectors.
Suppose that $\{e_i : i \in I\}$ is a basis of the real vector space $X_h$. By (\ref{decom}), it spans the whole space $X$.
If $\sum_i \alpha_i e_i = 0$ for $\alpha_i \in \mathbb C$, then we have $\sum_i \alpha_i e_i = (\sum_i \alpha_i e_i)^* = \sum_i \bar \alpha_i e_i$,
and so $\sum_i {\rm i}(\alpha_i-\bar \alpha_i) e_i = 0$.
Since each coefficient ${\rm i}(\alpha_i-\bar \alpha_i)$ is real, we have $\alpha_i = \bar \alpha_i$, that is, $\alpha_i \in \mathbb R$, thus $\alpha_i=0$.
Therefore, every basis of $X_h$ itself gives rise to a Hermitian basis of $X$.
The following proposition is a $*$-vector space version of \cite[Proposition II.1]{han_kye_choi_2}.

\begin{proposition}\label{basic_basis}
For a given bilinear pairing $\lan\ ,\ \ran_{X,Y}$ between finite dimensional vector spaces
$X$ and $Y$ with the same dimension, the following are equivalent:
\begin{enumerate}
\item[{\rm (i)}]
$\lan\ ,\ \ran_{X,Y}$ is non-degenerate Hermiticity preserving,
\item[{\rm (ii)}]
there exist a Hermitian basis $\{e_i:i\in I\}$ of $X$ and a Hermitian basis $\{f_i:i\in I\}$ of $Y$  satisfying $\lan e_i,f_j\ran_{X,Y}=\delta_{ij}$,
\item[{\rm (iii)}]
for any given Hermitian basis  $\{e_i:i\in I\}$ of $X$, there exists a unique Hermitian
basis $\{f_i:i\in I\}$ of $Y$ satisfying $\lan e_i,f_j\ran_{X,Y}=\delta_{ij}$.
\end{enumerate}
\end{proposition}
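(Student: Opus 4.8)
The plan is to establish the cycle (i) $\Rightarrow$ (iii) $\Rightarrow$ (ii) $\Rightarrow$ (i), the engine being the uniqueness of the basis dual to a given basis under a non-degenerate pairing, combined with the conjugate-linearity of the involution. For (i) $\Rightarrow$ (iii), fix a Hermitian basis $\{e_i:i\in I\}$ of $X$. Since $\lan\ ,\ \ran_{X,Y}$ is non-degenerate and $\dim X=\dim Y$, the map $y\mapsto\lan\cdot,y\ran_{X,Y}$ is a linear isomorphism of $Y$ onto the dual space $X^{\rm d}$, so there is a \emph{unique} basis $\{f_i:i\in I\}$ of $Y$, namely the pull-back of the basis of $X^{\rm d}$ dual to $\{e_i\}$, with $\lan e_i,f_j\ran_{X,Y}=\delta_{ij}$. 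The only remaining point is that each $f_j$ is Hermitian, and this is where I would invoke Hermiticity preservation: since $e_i^*=e_i$ and $\delta_{ij}$ is real,
$$
\lan e_i,f_j^*\ran_{X,Y}=\lan e_i^*,f_j^*\ran_{X,Y}=\overline{\lan e_i,f_j\ran}_{X,Y}=\delta_{ij},
$$
so $\{f_j^*\}$ is again dual to $\{e_i\}$, and uniqueness of the dual basis forces $f_j^*=f_j$. Thus $\{f_i\}$ is a Hermitian basis; it is the unique such basis because it is already unique among all bases of $Y$.

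The implication (iii) $\Rightarrow$ (ii) is immediate: a Hermitian basis of $X$ exists because any $\mathbb R$-basis of $X_h$ gives one, as observed just before the statement, and feeding it into (iii) produces the pair of dual Hermitian bases required in (ii). For (ii) $\Rightarrow$ (i), I would expand arbitrary $x=\sum_i\alpha_i e_i$ and $y=\sum_j\beta_j f_j$ in the given Hermitian bases. Non-degeneracy follows at once from $\lan x,f_j\ran_{X,Y}=\alpha_j$ and $\lan e_i,y\ran_{X,Y}=\beta_i$, so that a vector annihilating the whole opposite space must vanish. Hermiticity preservation follows from $\lan x,y\ran_{X,Y}=\sum_i\alpha_i\beta_i$ together with $x^*=\sum_i\bar\alpha_i e_i$ and $y^*=\sum_j\bar\beta_j f_j$ (the bases being Hermitian and $*$ conjugate-linear), which give $\lan x^*,y^*\ran_{X,Y}=\sum_i\bar\alpha_i\bar\beta_i=\overline{\lan x,y\ran}_{X,Y}$.

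I do not anticipate a serious obstacle; the substance is simply the interaction between uniqueness of dual bases and the conjugate-linear involution. The one step deserving care is the Hermiticity of the dual basis in (i) $\Rightarrow$ (iii). An alternative there is to restrict $\lan\ ,\ \ran_{X,Y}$ to the non-degenerate $\mathbb R$-bilinear pairing on $X_h\times Y_h$ and apply the real version, which is exactly the framing of this proposition as the $*$-vector space analogue of \cite[Proposition II.1]{han_kye_choi_2}; but the ``apply $*$ and invoke uniqueness'' argument above is cleaner and self-contained, avoiding any re-derivation of the real case.
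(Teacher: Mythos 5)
Your proof is correct and follows essentially the same route as the paper's: the heart of both is the observation that if $\{f_j\}$ is the unique dual basis of a Hermitian basis $\{e_i\}$, then $\lan e_i,f_j^*\ran_{X,Y}=\lan e_i^*,f_j^*\ran_{X,Y}=\overline{\lan e_i,f_j\ran}_{X,Y}=\delta_{ij}$ forces $f_j^*=f_j$ by uniqueness. The only cosmetic differences are that you re-derive existence and uniqueness of the dual basis via the duality map where the paper simply cites \cite[Proposition II.1]{han_kye_choi_2}, and you write out the directions (iii) $\Rightarrow$ (ii) $\Rightarrow$ (i) that the paper dismisses as trivial.
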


\begin{proof}
The only nontrivial direction is (i) $\Rightarrow$ (iii).
Take a Hermitian basis $\{e_i:i\in I\}$ of $X$.
By \cite[Proposition II.1]{han_kye_choi_2}, there exists a unique basis $\{f_i:i\in I\}$ of $Y$ satisfying $\lan e_i,f_j\ran_{X,Y}=\delta_{ij}$.
The Hermiticity of $f_j$ follows from
$$
\lan e_i,f_j^*\ran_{X,Y}=\lan e_i^*,f_j^*\ran_{X,Y}=\overline{\lan e_i,f_j\ran}_{X,Y}=\lan e_i,f_j\ran_{X,Y}.
$$
\end{proof}

The following proposition can be regarded as a generalization of \cite{dePillis} to the cases of $*$-vector spaces,
because the Hermiticity preserving property of $\Gamma$ tells us that $\phi$ is Hermiticity preserving if and only if $\choi_\phi$
is Hermitian.

\begin{proposition}\label{gdfnh}
The isomorphism $\Gamma$ in {\rm (\ref{choi-defi})} is Hermiticity preserving if and only if its associated bilinear form $\lan\ ,\ \ran_V$ is Hermiticity preserving.
\end{proposition}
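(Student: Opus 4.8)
The plan is to route both equivalences through a single intermediate condition, namely that the dual basis $\{f_i\}$ is Hermitian, and to set things up so that this condition is visibly equivalent to each of the two properties in the statement. The key preliminary observation is that $\choi_\phi$ depends only on the form $\lan\ ,\ \ran_V$ and not on the particular dual pair $\{e_i\},\{f_i\}$ with $\lan e_i,f_j\ran_V=\delta_{ij}$. I would therefore fix $\{e_i\}$ to be a \emph{Hermitian} basis of $V$, which always exists since $V_h$ spans $V$ by (\ref{decom}); for the given non-degenerate form the dual basis $\{f_i\}$ then exists and is unique by \cite[Proposition II.1]{han_kye_choi_2}. With $e_i^*=e_i$ in hand, the involution on the domain side drops out of all computations, and this is what makes both implications short.

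First I would record that the form $\lan\ ,\ \ran_V$ is Hermiticity preserving if and only if $\{f_i\}$ is Hermitian. If the form is Hermiticity preserving, then Proposition \ref{basic_basis} (iii) applies verbatim to the Hermitian basis $\{e_i\}$ and yields that its unique dual basis $\{f_i\}$ is Hermitian. Conversely, if each $f_j$ is Hermitian, then $\lan e_i^*,f_j^*\ran_V=\lan e_i,f_j\ran_V=\delta_{ij}=\overline{\lan e_i,f_j\ran}_V$; since $\{e_i\}$ and $\{f_j\}$ are bases of $V$, this identity on basis pairs propagates by bilinearity to $\lan v_1^*,v_2^*\ran_V=\overline{\lan v_1,v_2\ran}_V$ for all $v_1,v_2$, so the form is Hermiticity preserving.

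Next I would show that $\Gamma$ is Hermiticity preserving if and only if $\{f_i\}$ is Hermitian. Using $e_i^*=e_i$, the two sides to be compared are
$$
\Gamma(\phi^\invol)=\sum_i e_i\ot \phi(f_i^*)^*,\qquad
\Gamma(\phi)^*=\sum_i e_i^*\ot \phi(f_i)^*=\sum_i e_i\ot \phi(f_i)^*.
$$
Since $\{e_i\}$ is linearly independent, the equality $\Gamma(\phi^\invol)=\Gamma(\phi)^*$ for a fixed $\phi$ is equivalent to $\phi(f_i^*)=\phi(f_i)$ for every $i$. Imposing this for \emph{all} $\phi\in\call(V,W)$ then forces $f_i^*=f_i$ for each $i$: were some $f_i-f_i^*$ nonzero, one could choose $\phi$ taking a nonzero value on it and contradict the identity (in the only nontrivial case $W\neq\{0\}$). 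Hence $\Gamma$ is Hermiticity preserving precisely when $\{f_i\}$ is Hermitian, and chaining this with the previous paragraph proves the proposition.

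I expect the single point requiring care to be the quantifier management in the converse half of the second step: one must pass correctly from the tensor identity, valid for all $\phi$, first to the pointwise identities $\phi(f_i^*)=\phi(f_i)$ using independence of the $e_i$, and only then to $f_i^*=f_i$ using the freedom in $\phi$. Everything else is bookkeeping that the initial choice of a Hermitian domain basis has already trivialized.
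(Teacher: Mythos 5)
Your proof is correct, and it takes a genuinely different route from the paper's. You normalize the domain-side basis to be Hermitian and funnel both implications through a single intermediate condition, namely that the unique dual basis $\{f_i\}$ is Hermitian: one equivalence comes from Proposition \ref{basic_basis} (iii) together with a direct extension from basis pairs, the other from unique-coefficient comparison in $V\ot W$ plus a separation argument (choosing $\phi$ nonzero on $f_i^*-f_i$, which needs $W\neq\{0\}$, a degeneracy the paper also tacitly excludes). The paper instead works with an arbitrary dual pair $\{e_i\},\{f_i\}$: for the forward direction it observes that Hermiticity preservation of the form makes $\{e_i^*\},\{f_i^*\}$ another dual pair inducing the same form, so basis-independence of $\choi_\phi$ gives $\Gamma(\phi^\invol)^*=\Gamma(\phi)$ at once; for the converse it evaluates $\Gamma$ on the test functionals $\phi_j(v)=\overline{\lan e_j,v^*\ran}_V$ manufactured from the form itself and compares coefficients to extract $\lan e_j^*,f_i\ran_V=\overline{\lan e_j,f_i^*\ran}_V$. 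Your normalization buys transparent bookkeeping --- the involution never touches the $e_i$, so the quantifier structure (fixed $i$ versus all $\phi$) stays clean --- and your test maps lie honestly in $\call(V,W)$, whereas the paper's $\phi_j$ are scalar-valued and require an implicit identification (e.g.\ tensoring with a fixed nonzero Hermitian vector of $W$) to be read as elements of $\call(V,W)$; in exchange, the paper's route never needs the existence of Hermitian bases nor Proposition \ref{basic_basis}. One wording caveat in your first step: the propagation from basis pairs is by conjugate-bilinearity of both sides of $\lan v_1^*,v_2^*\ran_V=\overline{\lan v_1,v_2\ran}_V$ (each side is conjugate-linear in each variable, and you use that both bases are Hermitian), not bilinearity proper; the computation itself is fine.
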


\begin{proof}
We take two bases $\{e_i\}$ and $\{f_i\}$ of $V$ satisfying $\lan e_i,f_j\ran_V=\delta_{ij}$.
If $\lan\ ,\ \ran_V$ is Hermiticity preserving, then we have $\lan e_i^*,f_j^*\ran_V = \overline{\lan e_i,f_j\ran}_V = \delta_{ij}$, that is,
$\{e_i^*\}$ and $\{f_i^*\}$ also define the same bilinear form $\lan\ ,\ \ran_V$.
Since the Choi matrix is independent of the choice of basis, we have
$$
\Gamma(\phi^\invol)^*  = \left(\sum_i e_i \otimes \phi(f_i^*)^*\right)^*
= \sum_i e_i^* \otimes \phi(f_i^*) = \Gamma(\phi).
$$
For the converse, we consider the linear functionals $\phi_j(v) = \overline{\lan e_j,v^*\ran_V}$.
Then we have
$$
\Gamma(\phi_j^\invol)^* = \left(\sum_i e_i \otimes \phi_j^\invol(f_i)\right)^*
= \sum_i \overline{\lan e_j, f_i \ran}_V e_i^*
= e_j^* = \sum_i \lan e_j^*, f_i \ran_V e_i,
$$
and
$$
\Gamma(\phi_j) = \sum_i e_i \otimes \phi_j (f_i)
= \sum_i \overline{\lan e_j, f_i^* \ran}_V e_i,
$$
which implies
$\lan e_j^*, f_i \ran_V = \overline{\lan e_j, f_i^* \ran}_V$.
Therefore, we have
$\lan v^*, w \ran_V =  \overline{\lan v, w^* \ran}_V$ for every $v\in V$ and $w\in W$.
\end{proof}

We have begun with bilinear forms $\lan\ ,\ \ran_V$ and $\lan\ ,\ \ran_W$ on $V$ and $W$ respectively,
to define the bilinear form $\lan\ ,\ \ran_{V\ot W}$ on $V\ot W$ and the linear isomorphism
$\Gamma:\phi\mapsto \choi_\phi$ from $\call(V,W)$ onto $V\ot W$.
The next step is to look for the bilinear pairing $\lan\ ,\ \ran_{X,Y}$ between $X=\call(V,W)$ and $Y=V\ot W$ satisfying the relation in
Corollary \ref{id-dual} (i). We have
$$
\begin{aligned}
\lan\Gamma(\phi), v \ot w \ran_{V\ot W}
&=\lan\choi_\phi, v \ot w \ran_{V\ot W} \\
&=\lan\textstyle\sum_i e_i\ot\phi(f_i), v\ot w \ran_{V\ot W} \\
&= \textstyle\sum_i \lan e_i,v \ran_V \lan \phi(f_i),w \ran_W \\
&= \lan \phi\left(\textstyle\sum_i \lan e_i, v \ran_V f_i\right), w\ran_W \\
&=\lan\phi(v), w\ran_W
\end{aligned}
$$
for $v\in V$ and $w\in W$, and so it is natural to define
\begin{equation}\label{formu_bil}
\lan\phi, v\ot w\ran_{X,Y}:=\lan\Gamma(\phi),v\ot w\ran_Y=\lan\phi(v),w\ran_W,\qquad \phi\in\call(V,W),\ v\in V,\ w\in W,
\end{equation}
as in (\ref{bilin-dir}).
Therefore, the bilinear pairing $\lan\ ,\ \ran_{X,Y}$ depends only on the bilinear form $\lan\ ,\ \ran_W$ on the range space $W$.
It is trivial to see that $\lan\ ,\ \ran_{X,Y}$ is Hermiticity preserving if and only if $\lan\ ,\ \ran_W$ is Hermiticity preserving.

Finally, we also define the bilinear form on the space $\call(V,W)$ by
\begin{equation}\label{map-id}
\lan\phi,\psi\ran_{\call(V,W)}:=\lan\choi_\phi,\choi_\psi\ran_{V\ot W},\qquad \phi,\psi\in\call(V,W).
\end{equation}
We have
\begin{equation}\label{map-iidd}
\lan\phi,\psi\ran_{\call(V,W)}
=\left\lan \textstyle\sum_i e_i\ot\phi(f_i),\sum_j e_j\ot\psi(f_j)\right\ran_{V\ot W}
=\textstyle\sum_{i,j} \lan e_i,e_j\ran_V\lan\phi(f_i),\psi(f_j)\ran_W
\end{equation}
for  bases $\{e_i\}$ and $\{f_i\}$ of $V$ satisfying $\lan e_i,f_j\ran_V = \delta_{i,j}$.
If both $\lan\ ,\ \ran_V$ and $\lan\ ,\ \ran_W$ are Hermiticity preserving then it is easy to see that
$\lan\ ,\ \ran_{\call(V,W)}$ is also Hermiticity preserving.
Since $\{e_i^*\}$ and $\{f_i^*\}$ also define $\lan\ ,\ \ran_V$, we have
$$
\begin{aligned}
\lan\phi^\invol,\psi^\invol\ran_{\call(V,W)}
&=\textstyle\sum_{i,j}\lan e_i,e_j\ran_V\lan \phi^\invol(f_i),\psi^\invol(f_j)\ran_W\\
&=\textstyle\sum_{i,j} \overline{\lan e_i^*,e_j^*\ran}_V \overline{\lan \phi(f_i^*),\psi(f_j^*)\ran}_W\\
&=\overline{\lan\phi,\psi\ran}_{\call(V,W)},
\end{aligned}
$$
as it was required.

From now on, all the above five bilinear pairings
$$
\lan\ ,\ \ran_{V},\qquad
\lan\ ,\ \ran_{W},\qquad
\lan\ ,\ \ran_{V\ot W},\qquad
\lan\ ,\ \ran_{X,Y},\qquad
\lan\ ,\ \ran_{\call(V,W)}
$$
will be denoted by just $\lan\ ,\ \ran$, which should be distinguished
through the contexts.
Note that the third and the last bilinear pairings are determined by both $\lan\ ,\ \ran_{V}$ and $\lan\ ,\ \ran_{W}$.
On the other hand, the forth one is determined by $\lan\ ,\ \ran_{W}$. All of them are Hermiticity preserving whenever both $\lan\ ,\ \ran_{V}$ and
$\lan\ ,\ \ran_{W}$ are Hermiticity preserving.

Recall that every linear isomorphism from $\call(V,W)$ onto $V\ot W$
is given by $\Gamma^\Theta=\Theta\circ\Gamma$ for a linear isomorphism $\Theta$ on $V\ot W$.
When $\Theta=\sigma\ot\tau$ is a simple tensor, the map $\Gamma^{\sigma\ot\tau}$ can be expressed by
the composition of maps by following identity
$$
(\sigma\ot\tau)(\choi_\phi)=\choi_{\tau\circ\phi\circ\sigma^*}
$$
for $\phi\in\call(V,W)$. See Proposition V.2 in \cite{han_kye_choi_2}.
Therefore, we have the identity
$$
\Gamma^{\sigma\ot\id_W}(\phi)=\choi_{\phi\circ\sigma^*}\in V\ot W,\qquad\phi\in \call(V,W).
$$
Varying isomorphisms $\sigma$ on $V$, they exhaust all possible linear isomorphisms from $\call(V,W)$ onto $V\ot W$
which admit the expressions (\ref{choi-defi}) like Choi matrices,
as it was shown in the second part \cite{han_kye_choi_2}.
More precisely, we have the following:

\begin{proposition}[\cite{han_kye_choi_2}]\label{one-side-var}
Suppose that $\Theta$ is a linear isomorphism on $V\ot W$.
Then the following are equivalent:
\begin{enumerate}
\item[{\rm (i)}]
there exist bases $\{e_i\}$ and $\{f_i\}$ of $V$ such that $\Gamma^{\Theta}(\phi)=\sum_i e_i\ot \phi(f_i)$ for all $\phi \in \mathcal L(V,W)$,
\item[{\rm (ii)}]
$\Theta=\sigma\ot\id_W$ for an isomorphism $\sigma$ on $V$.
\end{enumerate}
\end{proposition}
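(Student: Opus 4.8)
The plan is to prove the two implications directly. Throughout, let $\{e_i\}$ and $\{f_i\}$ denote the fixed dual bases with $\lan e_i,f_j\ran=\delta_{ij}$ that define $\Gamma$ in (\ref{choi-defi}), and (to avoid a clash of symbols) rename the bases asserted in (i) as $\{g_i\}$ and $\{h_i\}$. The implication (ii) $\Rightarrow$ (i) is immediate: if $\Theta=\sigma\ot\id_W$, then
$$
\Gamma^\Theta(\phi)=(\sigma\ot\id_W)\Big(\sum_i e_i\ot\phi(f_i)\Big)=\sum_i\sigma(e_i)\ot\phi(f_i),
$$
so, setting $g_i:=\sigma(e_i)$ (a basis, since $\sigma$ is an isomorphism), we obtain the form in (i) with $\{g_i\}$ and $\{f_i\}$. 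Equivalently, one may invoke the identity $(\sigma\ot\id_W)(\choi_\phi)=\choi_{\phi\circ\sigma^*}$ recorded in the text just above the proposition.

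The substance is the direction (i) $\Rightarrow$ (ii). The idea is to exploit surjectivity of $\Gamma$ to reconstruct $\Theta$ from its action on Choi matrices. Every $z\in V\ot W$ may be written as $z=\sum_i e_i\ot w_i=\Gamma(\phi)$, where $\phi$ is the unique map with $\phi(f_i)=w_i$; using the expansion $v=\sum_i\lan e_i,v\ran f_i$, this map is $\phi(v)=\sum_i\lan e_i,v\ran w_i$, whence $\phi(h_j)=\sum_i\lan e_i,h_j\ran w_i$. Feeding this into (i) gives
$$
\Theta(z)=\Theta(\Gamma(\phi))=\Gamma^\Theta(\phi)=\sum_j g_j\ot\phi(h_j)=\sum_j g_j\ot\sum_i\lan e_i,h_j\ran w_i=\sum_i\Big(\sum_j\lan e_i,h_j\ran g_j\Big)\ot w_i.
$$
Defining $\sigma:V\to V$ by $\sigma(e_i)=\sum_j\lan e_i,h_j\ran g_j$, the last expression is exactly $\sum_i\sigma(e_i)\ot w_i=(\sigma\ot\id_W)(z)$. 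Since $z$ was arbitrary, $\Theta=\sigma\ot\id_W$.

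It then remains to confirm that $\sigma$ is an isomorphism, which is where the only real care is needed. One can read off the matrix of $\sigma$ in the bases $\{e_i\}$ and $\{g_j\}$ as $[\lan e_i,h_j\ran]$, which is invertible because $\lan\ ,\ \ran_V$ is non-degenerate and $\{h_j\}$ is a basis; alternatively, since $\Theta=\sigma\ot\id_W$ is assumed an isomorphism and $\id_W$ is one, $\sigma$ must be one as well. The conceptual obstacle to anticipate is that (i) supplies \emph{two} independent bases $\{g_i\}$ and $\{h_i\}$, so a priori $\Theta$ could entangle the two tensor legs; the computation above is precisely what shows that $\{h_i\}$ merely re-expresses the inputs of $\phi$ (whose outputs stay in the $W$-factor unchanged) while $\{g_i\}$ only reshuffles the $V$-factor, so the second leg of $\Theta$ is forced to be the identity.
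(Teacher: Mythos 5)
Your proof is correct. Note, however, that this paper does not actually prove Proposition \ref{one-side-var}: it is imported from the second paper of the series \cite{han_kye_choi_2}, and the only ingredient recorded here is the identity $(\sigma\ot\tau)(\choi_\phi)=\choi_{\tau\circ\phi\circ\sigma^*}$ quoted just before the statement. So there is no in-paper argument to compare with, and your self-contained computation is a legitimate independent verification. Your direction (ii) $\Rightarrow$ (i) agrees with what that quoted identity gives. For (i) $\Rightarrow$ (ii), your key move --- writing an arbitrary $z=\sum_i e_i\ot w_i$ as $\Gamma(\phi)$ with $\phi(f_i)=w_i$, then reading off $\Theta(z)=\sum_i\sigma(e_i)\ot w_i$ with $\sigma(e_i)=\sum_j\lan e_i,h_j\ran g_j$ --- is sound, and both of your arguments that $\sigma$ is invertible are valid: the matrix $[\lan e_i,h_j\ran]$ is invertible because the form on $V$ is non-degenerate and $\{e_i\}$, $\{h_j\}$ are bases, and alternatively injectivity of $\Theta=\sigma\ot\id_W$ forces injectivity of $\sigma$. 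The route indicated in \cite{han_kye_choi_2} (and summarized in the Introduction and Conclusion of this paper) is conceptually different: there, expressions $\phi\mapsto\sum_i e_i\ot\phi(f_i)$ are identified with non-degenerate bilinear forms on the domain $V$ via $\lan e_i,f_j\ran=\delta_{ij}$, and changing that form corresponds to precomposition $\phi\mapsto\phi\circ\sigma$, i.e.\ to $\Theta=\sigma^*\ot\id_W$, which upon ranging over all isomorphisms $\sigma$ yields condition (ii). What your argument buys is independence from that machinery: a direct coordinate reconstruction of $\Theta$ showing that the second tensor leg is forced to be the identity. What the paper's route buys is the structural statement that Choi-type isomorphisms correspond exactly to bilinear forms on the domain space, which is the viewpoint the rest of the paper exploits.
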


Now, we suppose that $\Theta_1, \Theta_2$ are linear isomorphisms on $V\ot W$,
and we consider the bilinear pairing
$$
(\phi,z)\mapsto \lan\Gamma^{\Theta_2}(\phi), z\ran_{\Theta_1}
$$
for $\phi\in\call(V,W),\ z\in V\ot W$.
We look for conditions on $\Theta_1,\Theta_2$ with which this bi-linear pairing has the identity as in (\ref{formu_bil}).
For an isomorphism $\tau$ on $W$, we have
\begin{equation}\label{id-tau}
\lan\phi(v),w\ran_\tau
=\lan\phi(v),\tau^{-1}(w)\ran
=\lan\phi,v\ot\tau^{-1}(w)\ran
=\lan\phi, v\ot w\ran_{\id_V\ot\tau}.
\end{equation}
By Proposition \ref{rel-bi-forms}, we have the following:

\begin{proposition}\label{bi-form-simple-id}
Suppose that $\Theta_1$ and $\Theta_2$ are linear isomorphisms on $V\ot W$, and $\tau$ is
a linear isomorphism on $W$.
Then the following are equivalent:
\begin{enumerate}
\item[{\rm (i)}]
$\lan\Gamma^{\Theta_2}(\phi),v\ot w\ran_{\Theta_1} = \lan\phi(v),w\ran_\tau$ for every $v\in V$, $w\in W$ and $\phi\in\call(V,W)$,
\item[{\rm (ii)}]
$\Theta_1\circ(\Theta_2^*)^{-1} = \id_V\ot\tau$.
\end{enumerate}
\end{proposition}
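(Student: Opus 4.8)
The plan is to read the statement off as the special case $\Theta_3=\id_V\ot\tau$ of Proposition \ref{rel-bi-forms}; the only genuine task is to rewrite the right-hand side of (i) into the form to which Proposition \ref{rel-bi-forms} applies verbatim.

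First I would invoke the identity (\ref{id-tau}), namely $\lan\phi(v),w\ran_\tau=\lan\phi, v\ot w\ran_{\id_V\ot\tau}$, to turn condition (i) into
$$
\lan\Gamma^{\Theta_2}(\phi), v\ot w\ran_{\Theta_1} = \lan\phi, v\ot w\ran_{\id_V\ot\tau},\qquad \phi\in\call(V,W),\ v\in V,\ w\in W.
$$
Since both sides are linear in the second slot and the simple tensors $v\ot w$ span $V\ot W$, this equality holds on all simple tensors if and only if it holds for every $z\in V\ot W$, that is, $\lan\Gamma^{\Theta_2}(\phi), z\ran_{\Theta_1}=\lan\phi, z\ran_{\id_V\ot\tau}$ for all $\phi$ and all $z$.

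Next I would verify the standing hypothesis of Proposition \ref{rel-bi-forms} in the present setting $X=\call(V,W)$, $Y=V\ot W$: the compatibility $\lan\Gamma(\phi),y\ran=\lan\phi,y\ran_{X,Y}$ between the Choi map (\ref{choi-defi}) and the pairing (\ref{formu_bil}) is exactly what the computation preceding (\ref{formu_bil}) establishes on simple tensors, and it extends to all $y$ by bilinearity. With this in hand, Proposition \ref{rel-bi-forms} applied with $\Theta_3=\id_V\ot\tau$ (a legitimate isomorphism on $V\ot W$ because $\tau$ is one on $W$) tells us that the reformulated identity is equivalent to $\Theta_1\circ(\Theta_2^*)^{-1}=\id_V\ot\tau$, which is condition (ii).

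I expect no real obstacle here: the argument is purely a translation into the hypotheses of Proposition \ref{rel-bi-forms}. The only step deserving a word of care is the reduction from simple tensors to arbitrary $z\in V\ot W$, which is justified because each pairing is linear in its second argument and the $v\ot w$ span $V\ot W$, so that restricting the test elements to simple tensors loses no information.
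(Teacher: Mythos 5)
Your proposal is correct and follows essentially the same route as the paper: the paper derives this proposition directly from the identity (\ref{id-tau}) together with Proposition \ref{rel-bi-forms}, exactly as you do. Your additional remarks (extending equality from simple tensors $v\ot w$ to all of $V\ot W$ by bilinearity, and checking that $\Gamma$ and $\lan\ ,\ \ran_{X,Y}$ satisfy the compatibility hypothesis of Proposition \ref{rel-bi-forms} via the computation preceding (\ref{formu_bil})) merely make explicit details the paper leaves implicit.
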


\section{$k$-superpositivity, Schmidt number $k$ and $k$-positivity}

For the algebra $M_m$ of $m\times m$ matrices, we fix the basis $\{e^m_{ij}:i,j=,1\dots,m\}$ with the usual matrix units,
which gives rise to the usual Choi matrix
$$
\choi_\phi=\sum_{i,j}e_{ij}\ot\phi(e_{ij})\in M_m\ot M_n
$$
for $\phi\in\call(M_m,M_n)$, and the non-degenerate Hermiticity preserving bilinear form on $M_m$ by
\begin{equation}\label{pairing}
\lan x,y\ran=\sum_{i.j=1}^m x_{ij}y_{ij}=\tr(xy^\ttt),\qquad x=[x_{ij}],\ y=[y_{ij}].
\end{equation}
Another basis may give rise to the same bilinear form. For example,
the $2 \times 2$ Weyl basis consisting of
$$
E_1=\textstyle\frac 1{\sqrt 2}\left(\begin{matrix}1&0\\0&1\end{matrix}\right),\
E_2=\textstyle\frac 1{\sqrt 2}\left(\begin{matrix}1&0\\0&-1\end{matrix}\right),\
E_3=\textstyle\frac 1{\sqrt 2}\left(\begin{matrix}0&1\\1&0\end{matrix}\right),\
E_4=\textstyle\frac 1{\sqrt 2}\left(\begin{matrix}0&-1\\1&0\end{matrix}\right)
$$
gives rise to the same bilinear form, consequently the same Choi matrix.

On the other hand, two bases $\{e^m_{ij}\}$ and $\{e^m_{ji}\}$
give rise to the non-degenerate Hermiticity preserving bilinear form
$$
\lan x,y\ran_\ttt=\sum_{i.j=1}^m x_{ij}y_{ji}=\tr(xy),\qquad x=[x_{ij}],\ y=[y_{ij}].
$$
It can be also obtained by single basis consisting of
Pauli matrices
$$
E_1=\textstyle\frac 1{\sqrt 2}\left(\begin{matrix}1&0\\0&1\end{matrix}\right),\
E_2=\textstyle\frac 1{\sqrt 2}\left(\begin{matrix}1&0\\0&-1\end{matrix}\right),\
E_3=\textstyle\frac 1{\sqrt 2}\left(\begin{matrix}0&1\\1&0\end{matrix}\right),\
E_4=\textstyle\frac 1{\sqrt 2}\left(\begin{matrix}0&-{\rm i}\\{\rm i}&0\end{matrix}\right).
$$
The Hermiticity of $2 \times 2$ Weyl basis and Pauli basis illustrates Proposition \ref{basic_basis}.

We also take the basis $\{e^n_{k,\ell}\}$ of $M_n$ to get the bilinear form on $M_n$.
Following (\ref{choi-defi}) and (\ref{formu_bil}),  we have the linear isomorphism $\Gamma : \call(M_m,M_n) \to M_m\ot M_n$ and
the non-degenerate Hermiticity preserving bilinear pairing on $\call(M_m,M_n)\times (M_m\ot M_n)$ by
\begin{equation}\label{stan-bil}
\Gamma(\phi)=\choi_\phi, \qquad \lan\phi,x\ot y\ran:=\lan\Gamma(\phi),x\ot y\ran=\lan\phi(x),y\ran,
\end{equation}
for $\phi\in\call(M_m,M_n)$, $x\in M_m$ and $y\in M_n$, which has been used in \cite{{stormer-dual},{eom-kye}}. See also \cite[Definition 4.2.1]{stormer_book}.
See \cite{{stormer-dual},{stormer08}} for infinite dimensional cases.

Suppose that $\Theta:M_m\ot M_n\to M_m\ot M_n$ is a Hermiticity preserving linear isomorphism. Then we have
a Hermiticity preserving linear isomorphism
$$
\Gamma^\Theta : \phi \in \call(M_m,M_n) \mapsto \choi^\Theta_\phi:=\Theta(\choi_\phi) \in M_m \otimes M_n,
$$
together with a Hermiticity preserving bilinear pairing
$$
\lan\phi,z\ran_\Theta:=\lan \phi,\Theta^{-1}(z)\ran, \qquad \phi\in\call(M_m,M_n),\ z\in M_m\ot M_n,
$$
and every Hermiticity preserving linear isomorphism and non-degenerate Hermiticity preserving bilinear pairing arise in these ways.

When $\Theta=\id\ot\id$, we see that $\choi^{\id\ot\id}_\phi$ is the usual Choi matrix. We also have
$$
\choi^{\ttt\ot\id}_\phi
=(\choi_\phi)^{\ttt\ot\id}
=\sum_{i,j}e_{ji}\ot\phi(e_{ij}),
$$
which was defined by de Phillis \cite{dePillis} prior to Choi matrix, and used by
Jamio\l kowski \cite{jam_72} to get correspondence between $\mathbb P_1$ and $\blockpos_1$.
We also have
$$
\choi^{\id\ot\ttt}_\phi=(\choi_\phi)^{\id\ot\ttt}
=\sum_{i,j}e_{ij}\ot \phi(e_{ij})^\ttt,
\qquad
\choi_\phi^{\ttt\ot\ttt}
=(\choi_\phi)^{\ttt\ot\ttt}
=\sum_{i,j}e_{ij}\ot\phi(e_{ji})^\ttt.
$$
We note that there exists no bases $\{e_k\}$ and $\{ f_k\}$ of $M_m$ satisfying the expression $\choi^{\id\ot\ttt}_\phi=\sum_k e_k\ot\phi(f_k)$
by Proposition \ref{one-side-var}. The same is true for $\choi^{\ttt\ot\ttt}_\phi$.

We apply Proposition \ref{basic} to see that $\phi\mapsto \choi^\Theta_\phi$ retains the correspondence between  $\superpos_k$ and $\cals_k$
if and only if the bilinear pairing $\lan\ ,\ \ran_\Theta$ retains the duality between $\mathbb P_k$ and $\cals_k$ if and only if $\Theta(\cals_k)=\cals_k$.
When $k=m\meet n$, we note that $\cals_{m\meet n}$ is nothing but the convex cone of all positive matrices, and we have the following:

\begin{theorem}[\cite{{schneider},{molnar},{semrl_souror}}]\label{pre_pos}
A linear isomorphism $\Theta:M_m\ot M_n\to M_m\ot M_n$ satisfies $\Theta(\cals_{m\meet n})=\cals_{m\meet n}$ if and only if
$\Theta=\ad_V$ for a nonsingular $V\in M_m\ot M_n$, $\Theta=\ttt_m\ot \ttt_n$, or their composition.
\end{theorem}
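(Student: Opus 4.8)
Throughout I identify $M_m\ot M_n$ with the full algebra $M_{mn}$ as $*$-algebras; under this identification $\cals_{m\meet n}$ is precisely the cone $\pos$ of positive semidefinite matrices in $M_{mn}$, the map $\ttt_m\ot\ttt_n$ becomes the ordinary transpose $\ttt_{mn}$ (because $(a\ot b)^\ttt=a^\ttt\ot b^\ttt$), and $\ad_V(x)=V^*xV$ ranges over the congruences by nonsingular $V\in M_{mn}$. The forward implication is routine: each $\ad_V$ with $V$ nonsingular maps $\pos$ bijectively onto $\pos$, the transpose preserves $\pos$, and hence so does any composition. It therefore remains to show that a linear isomorphism $\Theta$ with $\Theta(\pos)=\pos$ is necessarily of one of these forms.

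First I would observe that such a $\Theta$ is automatically Hermiticity preserving: since the real span of $\pos$ is $(M_m\ot M_n)_h$ and $\Theta$ is complex-linear with $\Theta(\pos)=\pos$, it maps $(M_m\ot M_n)_h$ onto itself. A linear bijection of a cone onto itself carries extreme rays to extreme rays, and the extreme rays of $\pos$ are exactly the rank-one positives $\xi\xi^*$ with $\xi\in\mathbb C^{mn}$. Hence for each $\xi$ there are a vector $\eta(\xi)\in\mathbb C^{mn}$ and a scalar $c(\xi)>0$ with $\Theta(\xi\xi^*)=c(\xi)\,\eta(\xi)\eta(\xi)^*$. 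Applying the same reasoning to $\Theta^{-1}$ shows that $\Theta$ preserves the rank of positive matrices, and more precisely preserves the inclusion order among the faces of $\pos$.

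The geometric core is to upgrade this into a (semi)linear map of $\mathbb C^{mn}$. The faces of $\pos$ are in inclusion-preserving bijection with the linear subspaces of $\mathbb C^{mn}$ (each face consisting of those positives whose range lies in a fixed subspace), so $\Theta$ induces an automorphism of the subspace lattice, equivalently a collineation $[\xi]\mapsto[\eta(\xi)]$ of the projective space of $\mathbb C^{mn}$. When $mn\ge 3$ the fundamental theorem of projective geometry produces a field automorphism $\sigma$ of $\mathbb C$ and a $\sigma$-semilinear bijection $A$ of $\mathbb C^{mn}$ with $\eta(\xi)$ proportional to $A\xi$; the degenerate cases $mn\le 2$ are checked directly. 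Substituting back gives $\Theta(\xi\xi^*)=c(\xi)\,(A\xi)(A\xi)^*$ for all $\xi$.

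Finally I would use the complex-linearity of $\Theta$ to pin down $\sigma$ and $c$. Expanding the linear identity $\Theta\big((\xi+t\zeta)(\xi+t\zeta)^*\big)$ for a scalar $t$ and comparing, via the four independent matrices built from $A\xi,A\zeta$, the coefficient of $t$ on the linear side against $\sigma(t)$ on the geometric side, one forces $\sigma(t)\in\{t,\bar t\}$ and forces $c$ to be constant (an argument in the spirit of Lemma \ref{basic-lem}, where a nonconstant weight is incompatible with additivity). If $\sigma=\id$ then $A$ is linear and, after absorbing the constant, $\Theta(\xi\xi^*)=V^*\xi\xi^*V$ with $V$ nonsingular, so $\Theta=\ad_V$ on rank-one positives; if $\sigma$ is conjugation then $A=V^*\circ(\text{entrywise conjugation})$ gives $(A\xi)(A\xi)^*=V^*(\xi\xi^*)^\ttt V$, so $\Theta=\ad_V\circ\ttt_{mn}$. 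Because the rank-one positives span $(M_m\ot M_n)_h$ over $\mathbb R$, these identities extend from extreme rays to all Hermitian elements by real-linearity, and then to all of $M_m\ot M_n$ by the decomposition (\ref{decom}). The main obstacle is the third step: correctly invoking the fundamental theorem of projective geometry (including the low-dimensional exceptions) and verifying that the collineation genuinely arises from the face lattice of $\pos$; once a semilinear $A$ is in hand, pinning down $\sigma$ and the constancy of $c$ is comparatively soft. This is exactly the route carried out in \cite{schneider,molnar,semrl_souror}.
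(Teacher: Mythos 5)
The paper never proves Theorem \ref{pre_pos}: it is imported wholesale from \cite{schneider}, \cite{molnar}, \cite{semrl_souror}, so there is no internal argument to compare yours against line by line. Judged on its own terms, your outline is a sound and essentially classical proof of the linear case: identify $\cals_{m\meet n}$ with the PSD cone $\pos$ of $M_{mn}$, note that a linear bijection with $\Theta(\pos)=\pos$ is an order isomorphism, pass to the face lattice (faces of $\pos$ correspond to subspaces of $\mathbb C^{mn}$ via ranges), invoke the fundamental theorem of projective geometry to get a $\sigma$-semilinear $A$ with $\Theta(\xi\xi^*)=c(\xi)(A\xi)(A\xi)^*$, and then use complex linearity of $\Theta$ to force $c$ constant and $\sigma$ to be the identity or conjugation, yielding $\Theta=\ad_V$ or $\Theta=\ad_V\circ(\ttt_m\ot\ttt_n)$ on the rank-one positives, hence everywhere by real-linear span and (\ref{decom}). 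This is a genuinely different route from the cited sources: Schneider's original argument runs through rank-one preservers in the Marcus--Moyls style, while Moln\'ar and \v{S}emrl--Sourour prove stronger statements (order automorphisms of the Hermitian part with no linearity hypothesis) by quite different techniques. What your approach buys is a self-contained proof of exactly what the theorem asserts; what it gives up is the extra generality of the references.

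Two steps you gloss over carry real content and should be spelled out if this were to stand as a complete proof. First, the case $mn=2$ is not a routine check: there the statement is precisely that every linear automorphism of the light cone of $2\times 2$ Hermitian matrices is a positive multiple of an orthochronous Lorentz transformation, with the identity component realized by congruences and the orientation-reversing part by congruence composed with transpose; this is classical but needs to be invoked, not just asserted. Second, in the last step one needs not only that $\sigma(t)\in\{t,\bar t\}$ for each $t$ separately, but that a field automorphism of $\mathbb C$ satisfying this pointwise dichotomy is globally the identity or globally conjugation (a short argument using $\sigma(it)=\sigma(i)\sigma(t)$), and the constancy of $c$ requires the full expansion with both $t$ and ${\rm i}t$ to isolate $\Theta(\zeta\xi^*)$ and $\Theta(\xi\zeta^*)$ before coefficients can be compared. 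Neither point is an error; both are standard, and with them filled in your proof is correct.
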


\begin{proposition}\label{preserve}
For a linear isomorphism $\Theta:M_m\ot M_n \to M_m\ot M_n$, the following are equivalent;
\begin{enumerate}
\item[{\rm (i)}]
$\Theta(\cals_k)=\cals_k$ for every $k=1,2,\dots, m\meet n$,
\item[{\rm (ii)}]
$\Theta(\cals_{m \meet n})=\cals_{m \meet n}$ and $\Theta(\cals_k)=\cals_k$ for some $k<{m \meet n}$,
\item[{\rm (iii)}]
$\Theta$ is one of the maps listed in {\rm (\ref{exam})} for nonsingular $s\in M_m$, $t\in M_n$, or their composition.
\end{enumerate}
\end{proposition}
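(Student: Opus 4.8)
The plan is to prove the cyclic chain (iii) $\Rightarrow$ (i) $\Rightarrow$ (ii) $\Rightarrow$ (iii), under the standing assumption $m\meet n\ge 2$ so that there is at least one index $k$ with $1\le k<m\meet n$; when $m\meet n=1$ every cone $\cals_k$ coincides with the full positive cone and the statement collapses to Theorem \ref{pre_pos}. For (iii) $\Rightarrow$ (i) I would check directly that each generator in (\ref{exam}) preserves the Schmidt number of pure states. Indeed $(\ad_s\ot\ad_t)(|\psi\ran\lan\psi|)=|(s^*\ot t^*)\psi\ran\lan (s^*\ot t^*)\psi|$, and nonsingularity of $s,t$ leaves $\sr\psi$ unchanged; the map $\ttt_m\ot\ttt_n$ is the full transpose and sends $|\psi\ran\lan\psi|$ to $|\bar\psi\ran\lan\bar\psi|$, while $\fl$ (when $m=n$) interchanges the two tensor legs of the vector; in both cases the Schmidt rank is unaffected. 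Since $\cals_k$ is the convex cone generated by the pure states whose vectors have Schmidt rank at most $k$, and each generator is invertible with inverse of the same type, every composition of these maps satisfies $\Theta(\cals_k)=\cals_k$ for all $k$. The implication (i) $\Rightarrow$ (ii) is then immediate, taking $k=m\meet n$ together with any $k$ in the range $1\le k<m\meet n$.

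The substance is in (ii) $\Rightarrow$ (iii). I would first feed the hypothesis $\Theta(\cals_{m\meet n})=\cals_{m\meet n}$ into Theorem \ref{pre_pos}, whose conclusion forces $\Theta=\ad_V$ or $\Theta=\ad_V\circ(\ttt_m\ot\ttt_n)$ for some nonsingular $V\in M_m\ot M_n$ (the pure transpose case being $V$ the identity). Because $\ttt_m\ot\ttt_n$ already preserves each $\cals_k$ by the first paragraph, the remaining hypothesis $\Theta(\cals_k)=\cals_k$ is, in either case, equivalent to $\ad_V(\cals_k)=\cals_k$. The problem thus reduces to classifying the nonsingular $V$ for which $\ad_V$ preserves $\cals_k$ for a single $k$ with $1\le k<m\meet n$.

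To do this I would restrict $\ad_V$ to pure states. From $\ad_V(|\psi\ran\lan\psi|)=|V^*\psi\ran\lan V^*\psi|$, and the fact that a pure state lies in $\cals_k$ precisely when its vector has Schmidt rank at most $k$, the condition $\ad_V(\cals_k)=\cals_k$, applied to $\ad_V$ and to its inverse $\ad_{V^{-1}}$, says exactly that $V^*$ restricts to a bijection of the set $\{\psi:\sr\psi\le k\}$ of vectors of Schmidt rank at most $k$. Invoking the classification of invertible linear maps preserving this set \cite{chan_lim}, I conclude that $V^*$, and hence $V$ itself, is of the form $s\ot t$, or (when $m=n$) a product of such a simple tensor with the swap $W$ on $\mathbb C^m\ot\mathbb C^n$. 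Since $\ad_{s\ot t}=\ad_s\ot\ad_t$ and $\ad_W=\fl$, and since $\ad_{AB}=\ad_B\circ\ad_A$, this yields $\ad_V=\ad_s\ot\ad_t$ or $\ad_V=(\ad_s\ot\ad_t)\circ\fl$ for nonsingular $s\in M_m$ and $t\in M_n$. Reinstating the optional factor $\ttt_m\ot\ttt_n$ from the previous step shows that $\Theta$ is a composition of the maps in (\ref{exam}), which is (iii).

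The main obstacle is exactly this last step. Theorem \ref{pre_pos} pins $\Theta$ down only up to an arbitrary nonsingular conjugation $\ad_V$, and the rigidity forcing $V$ to be local (or local after a swap) comes entirely from the extra datum that $\ad_V$ also preserves the lower cone $\cals_k$ for some $k$ below the top. The reduction to pure states converts this into the assertion that $V^*$ preserves the determinantal variety of vectors of Schmidt rank $\le k$, and the genuinely hard input is the Schmidt rank preserver theorem of \cite{chan_lim}: that preserving a single such threshold set already forces the tensor, or tensor-composed-with-swap, form is precisely what that result supplies.
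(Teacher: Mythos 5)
Your proposal is correct and follows essentially the same route as the paper: reduce via Theorem \ref{pre_pos} to $\Theta=\ad_V$ (absorbing the optional $\ttt_m\ot\ttt_n$ factor), pass to pure states/extreme rays of $\cals_k$ to see that $V^*$ preserves vectors of Schmidt rank $\le k$, and then invoke the preserver theorem of \cite{chan_lim} to force $V^*=s\ot t$ or $(s\ot t)$ composed with the swap, whence $\ad_V=\ad_s\ot\ad_t$ or $(\ad_s\ot\ad_t)\circ\fl$. The only differences are cosmetic: you spell out (iii) $\Rightarrow$ (i) and the trivial case $m\meet n=1$, which the paper dismisses as clear.
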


\begin{proof}
The direction (iii) $\Longrightarrow$ (i) is clear, and it remains to prove the direction (ii) $\Longrightarrow$ (iii).
Suppose that
$\Theta(\cals_{m\meet n})=\cals_{m\meet n}$ and $\Theta(\cals_k)=\cals_k$ for a fixed $k$ with $1\le k<m\meet n$.
By Theorem \ref{pre_pos}, we have $\Theta=\ad_V$ or $\Theta=\ad_V\circ(\ttt_m\ot \ttt_n)$ for a nonsingular $V\in M_m\ot M_n$.
Because $\Theta$ is an affine isomorphism on the convex cone $\cals_k$,
we see that $\Theta$ sends an extreme ray of $\cals_k$ to an extreme ray of $\cals_k$. Recall that $\varrho\in\cals_k$
generates an extreme ray if and only if $\varrho=|\zeta\ran\lan\zeta|$ for $|\zeta\ran\in\mathbb C^m\ot\mathbb C^n$
with $\sr|\zeta\ran\le k$, where $\sr|\zeta\ran$ denotes the Schmidt rank of $|\zeta\ran\in\mathbb C^m\ot\mathbb C^n$.
We first consider the case of $\Theta=\ad_V$. In this case, $V$ is a linear isomorphism from $\mathbb C^m\ot\mathbb C^n$ onto itself,
and $\sr(V^*|\zeta\ran)\le k$ whenever $\sr|\zeta\ran\le k$.
By \cite{chan_lim}, we have either
$$
V^*(|\xi\ran\ot |\eta\ran)
=(s \ot t)(|\xi\ran\ot |\eta\ran),
$$
or
$$
V^*(|\xi\ran\ot |\eta\ran)=(s\ot  t)(|\eta\ran\ot |\xi\ran)\quad {\text{\rm with}}\ m=n,
$$
for nonsingular $s\in M_m$ and $t\in M_n$.
In the first case, $\Theta=\ad_V=\ad_{s^*}\ot\ad_{t^*}$.
In the second case, we have $V^*=(s\ot t)\choi_\ttt\in M_n\ot M_n$, where we note that the Choi matrix $\choi_\ttt\in M_n\ot M_n$
of the transpose map $\ttt$ is the matrix representing the flip operator between $\mathbb C^n\ot\mathbb C^n$.
We also note that $\choi_\ttt( x\ot y )\choi_\ttt=y\ot x$ for $x,y\in M_n$.
Therefore, we have $\Theta=(\ad_{s^*}\ot\ad_{t^*})\circ\fl$.

It remains to consider the case of $\Theta=\ad_V\circ(\ttt_m\ot \ttt_n)$.
In this case, we apply the above to $\Theta\circ(\ttt_m\ot \ttt_n)=\ad_V$.
This shows that $\Theta$ is one of the maps listed in (\ref{exam}) or their composition.
\end{proof}

Again, we suppose that $\Theta : M_m\ot M_n\to M_m\ot M_n$ is a  Hermiticity preserving linear isomorphism.
We recall that $\blockpos_k^\circ=\cals_k$ with respect to the bilinear form on $M_m\ot M_n$
given by (\ref{tensor-id}) and (\ref{pairing}). Therefore, we may apply Proposition \ref{hvmmj} to see that
$\Theta(\blockpos_k)=\blockpos_k$ if and only if $\Theta^*(\cals_k)=\cals_k$, and so
if $\Theta(\blockpos_k)=\blockpos_k$ then
$\Theta^*$ must be one of the maps listed in (\ref{exam}) or their composition.
We note that $(\ad_s\ot\ad_t)^*=\ad_{s^\ttt}\ot \ad_{t^\ttt}$ and the maps $\ttt_m\ot \ttt_n$ and $\fl$ are invariant
under taking the dual $\Theta\mapsto \Theta^*$. Applying Proposition \ref{basic}, we have the following:

\begin{theorem}\label{thm}
For a Hermiticity preserving linear isomorphism  $\Theta : M_m\ot M_n\to M_m\ot M_n$,
the following are equivalent:
\begin{enumerate}
\item[{\rm (i)}]
$\Theta$ is one of the maps listed in {\rm (\ref{exam})} for nonsingular $s\in M_m$, $t\in M_n$, or their composition,
\item[{\rm (ii)}]
$\Gamma^\Theta : \phi \in \call(M_m,M_n) \mapsto \choi^\Theta_\phi \in M_m \otimes M_n$
retains the correspondence between $\superpos_k$ and $\cals_k$ for $k=1,2,\dots, m\meet n$,
\item[{\rm (iii)}]
$\Gamma^\Theta : \phi \in \call(M_m,M_n) \mapsto \choi^\Theta_\phi \in M_m \otimes M_n$
retains the correspondence between $\mathbb P_k$ and $\blockpos_k$ for $k=1,2,\dots, m\meet n$,
\item[{\rm (iv)}]
$\lan\ ,\ \ran_\Theta$ retains the duality between $\mathbb P_k$ and $\cals_k$ for  $k=1,2,\dots, m\meet n$,
\item[{\rm (v)}]
$\lan\ ,\ \ran_\Theta$ retains the duality between $\superpos_k$ and $\blockpos_k$ for  $k=1,2,\dots, m\meet n$.
\end{enumerate}
\end{theorem}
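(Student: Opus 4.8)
The plan is to reduce all five conditions to the two cone-invariance statements $\Theta(\cals_k)=\cals_k$ and $\Theta(\blockpos_k)=\blockpos_k$ for all $k=1,\dots,m\meet n$, for which Propositions \ref{basic}, \ref{preserve} and \ref{hvmmj} have already prepared the ground; the theorem then becomes an assembly of the dualities recorded in the diagram (\ref{diagram}).

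First I would treat the cluster (i), (ii), (iv). Applying Proposition \ref{basic} with the closed convex cone $C=\cals_k$ of $(M_m\ot M_n)_h$, and using the standard identifications $\Gamma^{-1}(\cals_k)=\superpos_k$ and ${}^{\circ_{X,Y}}\cals_k=\mathbb P_k$ read off from (\ref{diagram}), I obtain for each fixed $k$ that the three assertions --- that $\Gamma^\Theta$ retains the correspondence between $\superpos_k$ and $\cals_k$, that $\lan\ ,\ \ran_\Theta$ retains the duality between $\mathbb P_k$ and $\cals_k$, and that $\Theta(\cals_k)=\cals_k$ --- are equivalent. Quantifying over $k$ gives that (ii) and (iv) are each equivalent to the statement that $\Theta(\cals_k)=\cals_k$ for every $k$, which is precisely condition (i) of Proposition \ref{preserve} and hence equivalent to condition (i) of the theorem.

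Next I would run the identical argument with $C=\blockpos_k$, now using $\Gamma^{-1}(\blockpos_k)=\mathbb P_k$ and ${}^{\circ_{X,Y}}\blockpos_k=\superpos_k$, to obtain that (iii) and (v) are each equivalent to the statement that $\Theta(\blockpos_k)=\blockpos_k$ for every $k$. To tie this cluster back to (i), I would invoke $\blockpos_k^{\circ}=\cals_k$ together with Proposition \ref{hvmmj}, which give $\Theta(\blockpos_k)=\blockpos_k$ for all $k$ if and only if $\Theta^*(\cals_k)=\cals_k$ for all $k$; by Proposition \ref{preserve} applied to $\Theta^*$, the latter says exactly that $\Theta^*$ lies in the list (\ref{exam}).

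The only genuinely new point to verify is that the list (\ref{exam}) is stable under the passage $\Theta\mapsto\Theta^*$, so that $\Theta^*$ lies in the list if and only if $\Theta$ does; this closes the loop between (i) and the cluster (iii), (v). It follows from $(\ad_s\ot\ad_t)^*=\ad_{s^\ttt}\ot\ad_{t^\ttt}$ (where $s\mapsto s^\ttt$ is a bijection of the nonsingular matrices), from the self-duality of $\ttt_m\ot\ttt_n$ and of $\fl$, and from $(\Theta\circ\Psi)^*=\Psi^*\circ\Theta^*$, so that a composition of listed maps dualizes to a composition of listed maps. I do not anticipate a serious obstacle here: the substantive classification is already carried by Proposition \ref{preserve} (which itself rests on Theorem \ref{pre_pos} and the Schmidt-rank non-decreasing maps of \cite{chan_lim}), and what remains is the formal bookkeeping of dual cones developed in Section 2.
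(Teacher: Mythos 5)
Your proposal is correct and takes essentially the same route as the paper: Proposition \ref{basic} applied with $C=\cals_k$ (giving (ii) $\Leftrightarrow$ (iv) $\Leftrightarrow$ $\Theta(\cals_k)=\cals_k$) and with $C=\blockpos_k$ (giving (iii) $\Leftrightarrow$ (v) $\Leftrightarrow$ $\Theta(\blockpos_k)=\blockpos_k$), Proposition \ref{preserve} for the classification, and Proposition \ref{hvmmj} via $\blockpos_k^\circ=\cals_k$ to reduce the second cluster to $\Theta^*(\cals_k)=\cals_k$. Your closing verification that the list (\ref{exam}) is stable under $\Theta\mapsto\Theta^*$, using $(\ad_s\ot\ad_t)^*=\ad_{s^\ttt}\ot\ad_{t^\ttt}$ and the self-duality of $\ttt_m\ot\ttt_n$ and $\fl$, is exactly the observation the paper records before stating the theorem.
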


We note that the condition (ii) or (iii) of Theorem \ref{thm} already implies that $\Theta$ is Hermiticity preserving. To see this,
we recall that $(M_m\ot M_n)_h=(M_m)_h\ot_{\mathbb R}(M_n)_h$ and $(M_m)_h = M_m^+ - M_m^+$, and so
$\cals_1$ span the real space $(M_m\ot M_n)_h$. Therefore, it follows that if $\Gamma^\Theta$ sends $\cals_1$ to Hermitian elements then both $\Gamma^\Theta$ and $\Theta$
should be Hermiticity preserving.
By our definition of \lq retain the duality\rq,  we note that the statement (iv) tells us
$\mathbb P_k^{\circ_\Theta}=\cals_k$ and $\mathbb P_k={}^{\circ_\Theta}\cals_k$, which mean
$$
\varrho\in\cals_k\ \Longleftrightarrow\ \lan\phi,\varrho\ran_\Theta\ge 0\ {\text{\rm for every}}\ \phi\in\mathbb P_k,
$$
and
$$
\phi\in\mathbb P_k\ \Longleftrightarrow\ \lan\phi,\varrho\ran_\Theta\ge 0\ {\text{\rm for every}}\ \varrho\in\cals_k,
$$
respectively. The same is true for $\superpos_k$ and $\blockpos_k$ in the statement (v).

So far, we found all linear isomorphisms from $\call(M_m,M_n)$ onto
$M_m\ot M_n$ which retain the correspondences in the diagram
(\ref{diagram}), and all non-degenerate Hermiticity preserving bilinear pairings
which retain the dualities in (\ref{diagram}).
In the remainder of this section, we consider the duality between $\cals_k$ and
$\blockpos_k$ on the bottom row as well as the duality between
$\superpos_k$ and $\mathbb P_k$ on the top row in the diagram
(\ref{diagram}).

Recall that every Hermiticity preserving bilinear form on $M_m\ot M_n$ is given by
$$
\lan\varrho_1,\varrho_2\ran_\Theta=\lan\varrho_1,\Theta^{-1}(\varrho_2)\ran,\qquad \varrho_1,\varrho_2\in M_m\ot M_n,
$$
for a Hermiticity preserving linear isomorphism $\Theta$ on $M_m \otimes M_n$.
We apply Proposition \ref{basic} with $X=Y=M_m \ot M_n$ to see that $\lan\ ,\ \ran_\Theta$ retains the duality between $\blockpos_k$ and $\cals_k$
if and only if $\Theta(\cals_k)=\cals_k$, and such $\Theta$'s are found in Proposition \ref{preserve}.

We also see that every non-degenerate Hermiticity preserving bilinear form on $\call(M_m,M_n)$ is of the form
$$
\lan\phi,\psi\ran_\Theta:=\lan\choi_\phi,\choi_\psi\ran_\Theta,\qquad \phi,\psi\in\call(M_m,M_n),
$$
for a Hermiticity preserving linear isomorphism $\Theta$ on $M_m \otimes M_n$.
Thus, $\lan\phi,\psi\ran_\Theta$ retains the duality between $\mathbb P_k$ and $\superpos_k$
if and only if $\Theta(\cals_k)=\cals_k$ if and only if $\Theta$ is again one of (\ref{exam}) or their composition.

\section{Conclusion}

In this paper, we found all isomorphisms from $\call(M_m,M_n)$ onto $M_m\ot M_n$ which retain the correspondences in the diagram (\ref{diagram}),
and found all bilinear pairings between $\call(M_m,M_n)$ and $M_m\ot M_n$ which retain the dualities in (\ref{diagram}).

We recall that there is a natural isomorphism $\call(V,W)=V^{\rm d}\ot W$, by which $f\ot w$ in $V^{\rm d}\ot W$ corresponds to
the map $v\mapsto f(v)w$ in $\call(V,W)$. Therefore, one may expect that the simplest way to get
an isomorphism from $\call(V,W)$ onto $V\ot W$ is to use a duality map $V\to V^{\rm d}$ which depends on a bilinear form on $V$.
Actually, it was shown in the previous paper \cite{han_kye_choi_2} that this is the case when and only when the isomorphism can be expressed
by a formula which looks like a Choi matrix. In this sense, we can say that all the variants of Choi matrices are determined by
bilinear forms on the domain space. In this current paper, we have considered all the possible isomorphisms
from $\call(V,W)$ onto $V\ot W$ beyond them.

As for Choi matrices, $\choi_\phi$ and $\choi^{\ttt\ot\id}_\phi$ are
used in the literature, as they were defined by Choi
\cite{choi75-10} and de Pillis \cite{dePillis}, respectively. By
Theorem \ref{thm}, we see that
$$
\phi\mapsto \choi^{\ttt\ot\id}_\phi
$$
does not retain all the correspondences in (\ref{diagram}).
Nevertheless, it retains the correspondence between $\superpos_1$
and $\cals_1$, as well as that between $\mathbb P_1$ and
$\blockpos_1$ by Propositions \ref{basic} and \ref{hvmmj}. In fact,
the isomorphism $\phi\mapsto\choi^{\ttt\ot\id}_\phi$ retains the
correspondence between $\superpos_k$ and $\cals_k$ only when $k=1$.
See
\cite{{Jiang_Luo_Fu_2013},{maj_ty_2013a},{johnson_viola},{Frembs_Cavalcanti}}
for the discussions on the two isomorphisms $\phi\mapsto \choi_\phi$
and $\phi\mapsto\choi^{\ttt\ot\id}_\phi$. When $m=n$, the flip
$\choi^\fl_\phi$ is also defined in \cite[Section 11.3]{beng_zyc}.
The recent paper \cite{schmidt} also discusses another variants of
Choi matrices, which turn out to be $\choi_\phi^\Theta$ in our notation, with
$\Theta=\ad_U$ for a global unitary $U$. Note that $\phi\mapsto \choi_\phi^{\ad_U}$ retains
the correspondence between $\mathbb C\mathbb P$ and
$\mathcal P$ in the diagram (\ref{diagram}) with $k=m\meet n$.
But, this retains the correspondence between $\superpos_k$ and $\cals_k$ for $k<m\meet n$ only when
$U$ is a local unitary.

By the natural isomorphism $\call(V,W)=(V\ot W^{\rm d})^{\rm d}$, we have a natural bilinear pairing between
$\call(V,W)$ and $V\ot W^{\rm d}$. Therefore, the simplest way to get a bilinear pairing between $\call(V,W)$ and $V\ot W$ must be
to define a duality map $W\to W^{\rm d}$, which depends on a bilinear form on the range space $W$.
Recall that the bilinear pairings in (\ref{bilin-dir}) and (\ref{formu_bil}) are determined by a bilinear form on the range space.

We close this paper by examining several bilinear pairings in the literature.
We first examine what happens when the bilinear pairing between $\call(M_m,M_n)$ and $M_m\ot M_n$ is given by a bilinear form on $M_n$,
as in \cite{stormer-dual} and \cite{eom-kye}.
If the bilinear pairing is given by
$$
(\phi,x\ot y)\mapsto \lan \phi(x),y\ran_\tau
$$
for a Hermiticity preserving linear isomorphism $\tau:M_n\to M_n$, then
this pairing retains all the dualities if and only if $\id\ot\tau$ is one of
{\rm (\ref{exam})} or their composition, by (\ref{id-tau}) and Theorem \ref{thm}.
This is the case if and only if $\tau=\ad_t$ for a nonsingular $t\in M_n$.
Suppose that $\Theta_1$ and $\Theta_2$ are linear isomorphisms on $M_m\ot M_n$.
Then we see that the bilinear pairing
$$
(\phi,z)\mapsto \lan \choi^{\Theta_2}_\phi, z\ran_{\Theta_1}
$$
retains all the dualities if and only if $\Theta_1\circ(\Theta_2^*)^{-1}$ is one of
{\rm (\ref{exam})} or their composition, by Proposition \ref{rel-bi-forms} and Theorem \ref{thm}.

A couple of other bilinear pairings have been used in the literature. The bilinear pairing
$$
(\phi,z)\mapsto \lan \choi_\phi, z\ran_{\ttt\ot\ttt},
$$
used in \cite{woronowicz}, retains all the dualities in the diagram
(\ref{diagram}). But, there exists no linear isomorphism
$\tau:M_n\to M_n$ satisfying $\lan \choi_\phi, x\ot
y\ran_{\ttt\ot\ttt}=\lan\phi(x),y\ran_\tau$ by Proposition
\ref{bi-form-simple-id}. In other words, this bilinear pairing is
not determined by a bilinear form on the range space.

On the other hand, the bilinear pairing
$$
(\phi,z)\mapsto \lan \choi_\phi^{\ttt\ot\id}, z\ran_{\ttt\ot\ttt}
$$
was used in \cite{horo-1}.
In this case, we have the relation
$$
\lan \choi_\phi^{\ttt\ot\id}, x\ot y\ran_{\ttt\ot\ttt}=\lan\phi(x),y\ran_\ttt
$$
by $(\ttt\ot\ttt)\circ((\ttt\ot\id)^*)^{-1} = (\id\ot\ttt)$ and Proposition \ref{bi-form-simple-id},
and this bilinear pairing is determined by the bilinear form $\lan\ ,\ \ran_\ttt$ on the range.
Since this bilinear pairing coincides with $\lan \phi, x \otimes y \ran_{\id\ot\ttt}$ by (\ref{id-tau}),
it does not retain all the dualities in the diagram (\ref{diagram}). Especially, the dual cone of $\pos$
is not $\cp$, but the convex cone of all completely copositive maps. Nevertheless,
it retains the duality between $\mathbb P_1$ and $\cals_1$,
as well as that between $\superpos_1$ and $\blockpos_1$ by Propositions \ref{basic} and \ref{hvmmj}.

As for bilinear forms on $\call(M_m,M_n)$, the bilinear form
$$
(\phi,\psi)\mapsto\lan\choi_\phi,\choi_\psi\ran_{\ttt\ot\ttt},\qquad \phi,\psi\in\call(M_m,M_n)
$$
was defined in \cite{ssz}.  See also \cite[Section 11.2]{beng_zyc}.
Since $(\ttt\ot\ttt)(\cals_k)=\cals_k$, we see that this bilinear form retains the duality between $\mathbb P_k$ and $\superpos_k$.
On the other hand, the bilinear form
$$
(\phi,\psi)\mapsto\lan\choi_\phi,\choi_\psi\ran,\qquad \phi,\psi\in\call(M_m,M_n)
$$
has been used in \cite{{gks},{kye_comp-ten},{kye_lec_note}}.

\section{APPENDIX: Isomorphisms preserving separability}

In this paper, we characterized linear isomorphisms $\Theta$ on $M_m\ot M_n$ satisfying $\Theta(\cals_k)=\cals_k$
for every $k=1,2,\dotsm m\meet n$. It seems to be also interesting problems to look for $\Theta$ satisfying $\Theta(\cals_k)=\cals_k$
for a fixed $k$. When $k=m\meet n$, the answer is given in \cite{{schneider},{molnar},{semrl_souror}}, as it is restated in Theorem \ref{pre_pos}. It was also shown in
\cite{{alfsen},{flps}} that $\Theta$ preserves both $\cals_1$ and the trace if and only if
$\Theta$ is one of
\begin{equation}\label{exam_tr}
\ad_s\ot \ad_t,\qquad \ttt_m\ot \id_n, \qquad \id_m\ot\ttt_n,\qquad \fl\quad {\rm when}\ m=n,
\end{equation}
or their composition, with unitaries $s\in M_m$ and $t\in M_n$.
The purpose of this appendix is to show the following:

\begin{theorem}\label{s_1}
A linear isomorphism $\Theta:M_m\ot M_n\to M_m\ot M_n$ satisfies $\Theta(\cals_1)=\cals_1$ if and only if
$\Theta$ is one of {\rm (\ref{exam_tr})} with nonsingular $s\in M_m$ and $t\in M_n$, or their composition.
\end{theorem}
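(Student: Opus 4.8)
The \lq\lq if\rq\rq\ part is routine: each map in (\ref{exam_tr}) either acts as a positivity preserving map on each tensor factor ($\ad_s\ot\ad_t$, the partial transposes $\ttt_m\ot\id_n$ and $\id_m\ot\ttt_n$) or merely exchanges the two factors ($\fl$), so each sends an element $\sum_i a_i\ot b_i$ with $a_i,b_i\ge 0$ to another such element and hence preserves $\cals_1$. I would concentrate on the converse. Since $\cals_1$ spans the real space $(M_m\ot M_n)_h$, any $\Theta$ with $\Theta(\cals_1)=\cals_1$ is automatically Hermiticity preserving by (\ref{decom}). As a linear automorphism of the pointed generating cone $\cals_1$, $\Theta$ carries extreme rays bijectively onto extreme rays, and the extreme rays of $\cals_1$ are exactly the product projections $P_{\xi,\eta}:=|\xi\ran\lan\xi|\ot|\eta\ran\lan\eta|$. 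Thus $\Theta$ induces a bijection $\Phi$ on the set of product projections, and the whole plan is to show that $\Phi$, and then $\Theta$ itself, splits along the two tensor factors.

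The key device is an intrinsic, $\Theta$-invariant relation detecting a shared factor. Call $P_{\xi_1,\eta_1}$ and $P_{\xi_2,\eta_2}$ \emph{linked} if the smallest face of $\cals_1$ containing both has more than two extreme rays. Viewing a vector of $\mathrm{span}(|\xi_1\rangle\ot|\eta_1\rangle,\,|\xi_2\rangle\ot|\eta_2\rangle)$ as an $m\times n$ matrix and computing its rank shows that this span contains a product vector other than the two given ones (equivalently a whole projective line of them) precisely when $|\xi_1\ran\parallel|\xi_2\ran$ or $|\eta_1\ran\parallel|\eta_2\ran$; otherwise $P_{\xi_1,\eta_1}$ and $P_{\xi_2,\eta_2}$ span a two-ray simplicial face. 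Hence \lq\lq linked\rq\rq\ means exactly \lq\lq share a factor\rq\rq, and $\Phi$ preserves this relation. A short case analysis of the resulting graph (if two projections share the first factor and two share the second, some pair is unlinked) shows that every maximal clique consists either of projections with a common first factor or of projections with a common second factor, i.e.\ the maximal cliques are precisely the faces $G_\xi=\{|\xi\ran\lan\xi|\ot b:b\ge 0\}\cong M_n^+$ and $F_\eta=\{a\ot|\eta\ran\lan\eta|:a\ge 0\}\cong M_m^+$. Therefore $\Theta$ permutes the family $\{F_\eta\}\cup\{G_\xi\}$; comparing the real dimensions $m^2$ and $n^2$, the two families are preserved separately when $m\neq n$, while for $m=n$ they may be interchanged, in which case I replace $\Theta$ by $\Theta\circ\fl$ (which sends $F_\eta\leftrightarrow G_\eta$) to reduce to the non-interchanging case at the cost of a $\fl$ factor in the final answer.

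In the non-interchanging case there are bijections $\alpha$ and $\beta$ of the projective spaces of $\mathbb C^m$ and $\mathbb C^n$ with $\Theta(G_\xi)=G_{\alpha(\xi)}$ and $\Theta(F_\eta)=F_{\beta(\eta)}$; since $P_{\xi,\eta}$ is the unique common extreme ray of $G_\xi$ and $F_\eta$, this yields $\Phi(P_{\xi,\eta})=P_{\alpha(\xi),\beta(\eta)}$. Restricting the global linear bijection $\Theta$ to $G_\xi\cong M_n^+$ and identifying $\mathrm{span}\,G_\xi\cong M_n$ gives a linear bijection of $M_n$ preserving $M_n^+$, which by \cite{{schneider},{molnar},{semrl_souror}} (the single-factor instance of Theorem \ref{pre_pos}) equals $\ad_{t_\xi}$ or $\ad_{t_\xi}\circ\ttt_n$. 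Its action on rank-one projections realizes $\beta$, which is globally fixed; hence the transpose-or-not alternative is the same for all $\xi$ and $t_\xi$ is independent of $\xi$ up to a scalar. The symmetric argument on the faces $F_\eta$ supplies the corresponding $s$ for the first factor.

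It then remains to glue these factorwise data into one global map, and this is where I expect the main obstacle. One must verify a separation-of-variables consistency: that the positive scalars occurring when $\Theta(P_{\xi,\eta})$ is evaluated inside $G_\xi$ and inside $F_\eta$ agree, so that $\Theta(P_{\xi,\eta})=(\ad_s\ot\ad_t)(P_{\xi,\eta})$, possibly precomposed with $\ttt_m\ot\id_n$ and/or $\id_m\ot\ttt_n$ according to the two uniform transpose choices. The delicate point is precisely this scalar matching across the two families of faces, since a priori $\Phi$ only determines $\Theta$ on extreme rays up to positive factors. Once the identity is established on every $P_{\xi,\eta}$, it extends by linearity to $(M_m\ot M_n)_h=\mathrm{span}\,\cals_1$, so $\Theta$ equals the corresponding map in (\ref{exam_tr}); undoing the possible $\fl$ introduced in the $m=n$ reduction then gives exactly the maps listed in (\ref{exam_tr}) together with their compositions.
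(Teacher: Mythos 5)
Your proposal takes a genuinely different route from the paper's. The paper compresses $\Theta$ through the partial traces $\phi_1(A,B)=(\id\ot\tr)\Theta(A\ot B)$ and $\phi_2(A,B)=(\tr\ot\id)\Theta(A\ot B)$, classifies each of these in each variable by a rank-one-preserver lemma (Lemma \ref{rank1preserving}, after \cite{flps}), eliminates the impossible type combinations by a connectedness argument, explicit $2\times 2$ examples and surjectivity of $\Theta$, and finishes by separation of variables using the trace identity (\ref{trace}). You instead read the tensor splitting off the intrinsic convex geometry of $\cals_1$: your \lq\lq linked\rq\rq\ relation is correctly characterized (two product rays generate a face of $\cals_1$ with more than two extreme rays exactly when they share a factor, since a span of $\xi_1\ot\eta_1$ and $\xi_2\ot\eta_2$ with $\xi_1\nparallel\xi_2$, $\eta_1\nparallel\eta_2$ contains no other product vector), its maximal cliques are exactly the extreme rays of the faces $G_\xi\cong M_n^+$ and $F_\eta\cong M_m^+$, and then the single-factor theorem of \cite{{schneider},{molnar},{semrl_souror}} plus the fundamental theorem of projective geometry make the transpose type uniform and $t_\xi$, $s_\eta$ constant up to scalars. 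This is a legitimate alternative: it trades the paper's coordinate computations for facial analysis of the separable cone. One small omission: for $m=n$ you must check that the interchange of the two families $\{G_\xi\}$, $\{F_\eta\}$ is all-or-nothing; this follows since distinct $G_\xi$'s meet only at $0$ while every $G_\xi$ meets every $F_\eta$ in a nonzero ray, so mixed behaviour would contradict injectivity of $\Theta$.

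The one genuine gap is the step you flag and then leave unverified, but it is much easier than you fear, and your stated worry --- that \lq\lq a priori $\Phi$ only determines $\Theta$ on extreme rays up to positive factors\rq\rq\ --- misdiagnoses it: you are not working with the ray map $\Phi$ alone, but with the honest linear restrictions of $\Theta$ to ${\rm span}\,G_\xi$ and ${\rm span}\,F_\eta$, which your face analysis determines exactly. In the no-transpose case these read $\Theta(|\xi\ran\lan\xi|\ot b)=\mu_\xi\,|\alpha(\xi)\ran\lan\alpha(\xi)|\ot\ad_t(b)$ and $\Theta(a\ot|\eta\ran\lan\eta|)=\nu_\eta\,\ad_s(a)\ot|\beta(\eta)\ran\lan\beta(\eta)|$ with scalars $\mu_\xi,\nu_\eta>0$ obtained by absorbing the proportionality constants relating $t_\xi$ to $t$ and $s_\eta$ to $s$. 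Evaluating both expressions at the common element $P_{\xi,\eta}$, and using $\ad_t(|\eta\ran\lan\eta|)=\|t^*\eta\|^2\,|\beta(\eta)\ran\lan\beta(\eta)|$ and $\ad_s(|\xi\ran\lan\xi|)=\|s^*\xi\|^2\,|\alpha(\xi)\ran\lan\alpha(\xi)|$, gives $\mu_\xi\|t^*\eta\|^2=\nu_\eta\|s^*\xi\|^2$, that is, $\mu_\xi/\|s^*\xi\|^2=\nu_\eta/\|t^*\eta\|^2$ for all $\xi$ and $\eta$; since the left side depends only on $\xi$ and the right side only on $\eta$, both equal a constant $\lambda>0$, and replacing $s$ by $\lambda^{1/2}s$ yields $\Theta(|\xi\ran\lan\xi|\ot b)=(\ad_s\ot\ad_t)(|\xi\ran\lan\xi|\ot b)$ for every $\xi$ and $b$. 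As the elements $|\xi\ran\lan\xi|\ot b$ span $M_m\ot M_n$, this identifies $\Theta$ with $\ad_s\ot\ad_t$; the transpose and flip cases are verbatim. This is precisely the separation-of-variables step the paper performs after (\ref{trace}). With this paragraph inserted, your argument is a complete proof of Theorem \ref{s_1}, independent of the paper's.
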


In the first part, we follow the strategy of \cite[Theroem 3]{flps} with a modification.
The absence of trace preserving condition requires further argument involving separation of variables, which will be done in the second part.

We denote by $\mathcal E_n$ the set of all rank one positive operators  on $\mathbb C^n$, which generate all the extreme rays of the cone of $n \times n$ positive matrices.
Since $S^*|\xi \ran \lan \xi| S = |S^*\xi \ran \lan S^*\xi|$, the linear map $\ad_S : M_m \to M_n$ for $S \in M_{m,n}$ maps $\mathcal E_m$ to
$\mathcal E_n$ if and only if $S^*$ is injective if and only if $S$ is surjective.
In this case, $m \le n$ holds necessarily.
By the similar argument as in \cite[Lemma 4]{flps}, we get the following unnormalized version.

\begin{lemma}\label{rank1preserving}
Suppose that $\psi : M_m \to M_n$ is a Hermiticity preserving linear map and satisfies $\psi(\mathcal E_m) \subset \mathcal E_n$. Then one of the following holds;
\begin{enumerate}
\item[(i)] there exist $R \in \mathcal E_n$ and a faithful positive functional $f$ on $M_m$  such that $\psi(A)= f(A)R$,
\item[(ii)] $m \le n$ and there is a surjective $S \in M_{m,n}$ such that $\psi$ has the form
$$
\psi(A)=S^*AS \qquad \text{or} \qquad \psi(A)=S^*A^\ttt S.
$$
\end{enumerate}
\end{lemma}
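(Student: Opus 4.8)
The plan is to reconstruct $\psi$ from its action on the extreme rays in $\mathcal E_m$. Since $\psi(\mathcal E_m)\subset\mathcal E_n$, for each nonzero $\xi\in\mathbb C^m$ we may write $\psi(|\xi\ran\lan\xi|)=|\eta_\xi\ran\lan\eta_\xi|$ with a nonzero vector $\eta_\xi\in\mathbb C^n$ that is determined up to a phase. The whole argument then splits according to whether the directions $\{\eta_\xi\}$ all lie in a single one-dimensional subspace (the \emph{degenerate} case) or not.

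First I would dispose of the degenerate case. If every $\eta_\xi$ is a scalar multiple of one fixed unit vector $\zeta$, set $R=|\zeta\ran\lan\zeta|\in\mathcal E_n$ and $f(|\xi\ran\lan\xi|):=\|\eta_\xi\|^2\ge 0$. Because the rank-one positive matrices span $M_m$ (recall $(M_m)_h=M_m^+-M_m^+$ and every positive matrix is a sum of rank-one positives), $f$ extends to a linear functional with $\psi(A)=f(A)R$ for all $A$. Then $f$ is a positive functional, and it is faithful: $f(|\xi\ran\lan\xi|)=0$ would give $\psi(|\xi\ran\lan\xi|)=0$, contradicting $\psi(\mathcal E_m)\subset\mathcal E_n$. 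This is conclusion (i).

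In the non-degenerate case I fix $\xi,\xi'$ with $\eta_\xi,\eta_{\xi'}$ linearly independent and study the Hermitian pencil $g(t):=\psi(|\xi+t\xi'\ran\lan\xi+t\xi'|)=P+\bar t\,Q+t\,Q^*+|t|^2P'$ for $t\in\mathbb C$, where $P=|\eta_\xi\ran\lan\eta_\xi|$, $P'=|\eta_{\xi'}\ran\lan\eta_{\xi'}|$ and $Q=\psi(|\xi\ran\lan\xi'|)$; here $\psi(|\xi'\ran\lan\xi|)=Q^*$ by Hermiticity preservation. Each $g(t)$ is a positive matrix of rank $\le 1$, while $P,P'$ have independent ranges, and the rank-one constraint on this pencil forces $Q$ to be rank one and suitably aligned with $\eta_\xi,\eta_{\xi'}$, so that the family $\{\eta_{\xi+t\xi'}\}_t$ traces out, projectively, a line depending either holomorphically on $t$ or on $\bar t$. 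Globalizing over all such pairs and fixing phases consistently shows that $\xi\mapsto\eta_\xi$ is, up to a genuine choice of phase, either linear or conjugate-linear, i.e. $\eta_\xi=S^*\xi$ or $\eta_\xi=S^*\bar\xi$ for some $S\in M_{m,n}$. Since $\eta_\xi\neq 0$ for $\xi\neq 0$, the map $S^*$ is injective, hence $S$ is surjective and $m\le n$; and because rank-one positives span $M_m$, these two alternatives give $\psi(A)=S^*AS$ and $\psi(A)=S^*A^\ttt S$ respectively, which is conclusion (ii).

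The hard part will be the pencil analysis: extracting from the statement ``every value of $P+\bar tQ+tQ^*+|t|^2P'$ has rank $\le 1$'' the precise conclusions that $Q$ is rank one and that $\xi\mapsto\eta_\xi$ is projectively linear or conjugate-linear, together with the ensuing Wigner-type phase-fixing needed to upgrade this projective map to a genuine (conjugate-)linear map implemented by a single $S$. This is exactly where the argument of \cite[Lemma 4]{flps} is adapted. The one new feature relative to their normalized setting is that the weights $\|\eta_\xi\|^2$ are no longer pinned down by a trace-preserving hypothesis, so these scalars must be carried through the pencil and absorbed into $S$ (respectively into the functional $f$) instead of being set equal to one.
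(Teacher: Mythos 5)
Your proposal is correct and follows essentially the same route as the paper: the paper proves this lemma only by remarking that the argument of \cite[Lemma 4]{flps} adapts to the unnormalized setting, and your sketch is precisely that adaptation --- the degenerate case yielding $\psi(A)=f(A)R$ with $f$ faithful, and the non-degenerate case reducing to the rank-one pencil analysis of \cite{flps}, with the loss of trace-normalization correctly absorbed into the functional $f$ and the (now merely surjective, rather than isometric) matrix $S$.
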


In order to prove Theorem \ref{s_1}, we define the bilinear maps $\phi_1 : M_m \times M_n \to M_m$ and $\phi_2 : M_m \times M_n \to M_n$ by
$$
\phi_1(A,B) = ({\rm id} \otimes \tr)(\Theta(A \otimes B)), \qquad
\phi_2(A,B) = (\tr \otimes {\rm id})(\Theta(A \otimes B)).
$$
Since $\Theta$ is a linear isomorphism on $M_m \otimes M_n$ which maps $\mathcal S_1$ onto itself, $\Theta$ sends an extreme ray of $\cals_1$ onto
another extreme ray. Recall that every extreme ray of $\cals_1$ is generated by
$P \otimes Q$ for $P \in \mathcal E_m$ and $Q \in \mathcal E_n$. Therefore, $\Theta(P\ot Q)$ can be written by
$$
\Theta(P \otimes Q) =\lambda P' \otimes Q',
$$
for one dimensional projections $P', Q'$ and $\lambda > 0$.
Then, we have
$$
\lambda = \tr(\Theta(P \otimes Q)).
$$
Since
$$
\phi_1(P,Q) = \lambda P', \qquad
\phi_2(P,Q) = \lambda Q',
$$
we have
\begin{equation}\label{rep}
\Theta(P \otimes Q) = {1 \over \tr(\Theta(P \otimes Q))} \phi_1(P,Q) \otimes \phi_2(P,Q)
\end{equation}
and
\begin{equation}\label{trace}
\tr (\phi_1(P,Q)) = \tr(\Theta(P \otimes Q)) = \tr (\phi_2(P,Q)).
\end{equation}
We fix $Q \in \mathcal E_n$. The linear map $\phi_1(\,\cdot\,,Q) : M_m \to M_m$ (respectively,
$\phi_2(\,\cdot\,,Q) : M_m \to M_n$) maps $\mathcal E_m$ to $\mathcal E_m$ (respectively, $\mathcal E_m$ to $\mathcal E_n$).
By Lemma \ref{rank1preserving}, $\phi_1(\,\cdot\,,Q)$ (respectively, $\phi_2(\,\cdot\,,Q)$) are one of the following forms;
\begin{enumerate}
\item[(A1)] $A \mapsto S^*AS$ for an invertible $S \in M_m$ (respectively, a surjective $S \in M_{m,n}$ with $m \le n$);
\item[(A2)] $A \mapsto S^*A^\ttt S$ for an invertible $S\in M_m$ (respectively, a surjective $S \in M_{m,n}$ with $m \le n$);
\item[(B)] $A \mapsto f(A) R$ for $R \in \mathcal E_m$ (respectively, $R \in \mathcal E_n$) and a faithful positive functional $f$ on $M_m$.
\end{enumerate}
Note that all $S,R$ and $f$ depend on the choice of $Q$. We first show that
$\phi_1(\,\cdot\,,Q)$ is either of the form (A) for every $Q$, or of the form (B) for every $Q$.

Assume that it is possible that $\phi_1(\,\cdot\,,Q)$ has two different representations (A) and (B) at different choices of  $Q \in \mathcal E_n$.
Since $\mathcal E_n$ is path connected, it holds that either $\phi_1(\,\cdot\,,Q_k)$ of type (A) converges to $\phi_1(\,\cdot\,,Q)$ of type (B)
for some $Q_k, Q \in \mathcal E_n$, or vice versa.
It is impossible that the rank one operators $f_k(I)R_k$ converge to the invertible $S^*IS$.
Suppose that $S_k^*AS_k$ converges to $f(A)R$ for all $A \in M_m$.
Let $S$ be the cluster point of $S_k$ in $M_m$.
Then, we have $S^*AS=f(A)R$ for all $A \in M_m$, which implies that $S$ is not invertible.
Take a projection $P$ onto a vector orthogonal to the range of $S$.
Then, we have $f(P)R=S^*PS=0$, which contradicts that $f$ is faithful.
The similar argument also hold for (A2).
Hence, $\phi_1(\,\cdot\,,Q)$ is either of the form (A) for every $Q$, or (B) for every $Q$.

Assume that both $\phi_1(\,\cdot\,,Q)$ and $\phi_2(\,\cdot\,,Q)$ are of the form (A1); say
$$
\phi_1(\,\cdot\,,Q) = S_1^* \,\cdot\, S_1, \qquad \phi_2(\,\cdot\,,Q) = S_2^* \,\cdot\, S_2.
$$
By (\ref{rep}), we have
$$
\Theta(P \otimes Q) = \lambda_P S^*(P \otimes P)S
$$
for $P \in \mathcal E_m$, $\lambda_P = 1/\tr(\Theta(P \otimes Q))$ and $S = S_1 \otimes S_2$.
From this, we also have
$$
\Theta((P_1+P_2) \otimes Q) = \Theta(P_1 \otimes Q) + \Theta(P_2 \otimes Q) = S^*(\lambda_1 P_1 \otimes P_1 + \lambda_2 P_2 \otimes P_2)S
$$
for $\lambda_i = 1/\tr(\Theta(P_i \otimes Q))$.
Since $S$ is surjective, the condition $P_1+P_2=P_3+P_4$ implies that
$$
\lambda_1 P_1 \otimes P_1 + \lambda_2 P_2 \otimes P_2 = \lambda_3 P_3 \otimes P_3 + \lambda_4 P_4 \otimes P_4
$$
for $P_i \in \mathcal E_m$.
However, this is not possible, as we see with the following example
$$
P_1 = E_{11}, \quad P_2 = E_{22}, \quad P_3 = {1 \over 2} (E_{11}+E_{12}+E_{21}+E_{22}), \quad P_4 = {1 \over 2} (E_{11}-E_{12}-E_{21}+E_{22}).
$$
Since $P_1,P_2,P_3$ and $P_4$ are symmetric, we conclude that both $\phi_1(\,\cdot\,,Q)$ and $\phi_2(\,\cdot\,,Q)$ cannot be of the form (A) at the same time.

Next, assume that both $\phi_1(\,\cdot\,,Q)$ and $\phi_2(\,\cdot\,,Q)$ are of the form (B);
$$
\phi_1(\,\cdot\,,Q) = f_1(\,\cdot\,)R_1, \qquad \phi_2(\,\cdot\,,Q) = f_2(\,\cdot\,)R_2.
$$
By (\ref{rep}), we have
$$
\Theta(P \otimes Q) = {f_1(P)f_2(P) \over \tr(\Theta(P \otimes Q))} R_1 \otimes R_2,
$$
for all $P \in \mathcal E_m$, which contracts that $\Theta$ is surjective.

Therefore, we obtained the following dichotomy;
\begin{enumerate}
\item[(i)] $\forall Q \in \mathcal E_n$, $\phi_1(\,\cdot\,,Q)$ is of the form (A) and $\phi_2(\,\cdot\,,Q)$ is of the form (B),
\item[(ii)] $\forall Q \in \mathcal E_n$, $\phi_1(\,\cdot\,,Q)$ is of the form (B) and $\phi_2(\,\cdot\,,Q)$ is of the form (A).
\end{enumerate}
Similarly, we also have the dichotomy;
\begin{enumerate}
\item[(iii)] $\forall P \in \mathcal E_m$, $\phi_1(P,\,\cdot\,)$ is of the form (B)  and $\phi_2(P,\,\cdot\,)$ is the form (A),
\item[(iv)] $\forall P \in \mathcal E_m$, $\phi_1(P,\,\cdot\,)$ is of the form (A) and $\phi_2(P,\,\cdot\,)$ is of the form (B).
\end{enumerate}

Assume that both (i) and (iv) hold.
Fix $P_0 \in \mathcal E_m$ and $Q_0 \in \mathcal E_n$ and let $\phi_2(\,\cdot\,,Q) = f_Q(\,\cdot\,)R_Q$ and $\phi_2(P_0,\,\cdot\,) = f_0(\,\cdot\,)R_0$.
We have
$$
\phi_2(P,Q) = f_Q(P)R_Q = {f_Q(P) \over f_Q(P_0)} \phi_2(P_0,Q) = {f_Q(P) \over f_Q(P_0)} f_0(Q) R_0 = {f_Q(P) \over f_Q(P_0)} {f_0(Q) \over f_0(Q_0)} \phi_2(P_0,Q_0).
$$
Thus, $\phi_2(P,Q)$ is the scalar multiple of $\phi_2(P_0,Q_0)$ for all $P \in \mathcal E_m$, $Q \in \mathcal E_n$,
which contradicts that $\Theta$ is surjective.
Similarly, it is impossible that (ii) and (iii) hold at the same time.
It remains to consider the following two cases;
\begin{itemize}
\item
both cases (i) and (iii) hold,
\item
both cases (ii) and (iv) hold.
\end{itemize}

Suppose that (i) and (iii) hold, in particular, $\phi_1(\,\cdot\,,Q)$ and $\phi_2(P,\,\cdot\,)$ are of the form (A1).
Fix $P_0 \in \mathcal E_m$ and $Q_0 \in \mathcal E_n$ and let
$$
\begin{aligned}
\phi_1(\,\cdot\,,Q_0) = S_1^* \,\cdot\, S_1, \quad
&\phi_1(P, \,\cdot\,) = f_P(\,\cdot\,)R_P, \\
\phi_2(\,\cdot\,,Q) = g_Q(\,\cdot\,)R_Q, \quad
&\phi_2(P_0, \,\cdot\,) = S_2^* \,\cdot\, S_2.
\end{aligned}
$$
Then, we have
\begin{equation}\label{phi1}
\phi_1(P,Q) = f_P(Q)R_P = {f_P(Q) \over f_P(Q_0)} \phi_1(P,Q_0) =  {f_P(Q) \over f_P(Q_0)} S_1^*PS_1 =: F_P(Q) S_1^*PS_1,
\end{equation}
and
\begin{equation}\label{phi2}
\phi_2(P,Q) = g_Q(P)R_Q = {g_Q(P) \over g_Q(P_0)} \phi_2(P_0,Q) =  {g_Q(P) \over g_Q(P_0)} S_2^*QS_2 =:  G_Q(P)S_2^*QS_2.
\end{equation}
From
$$
\lambda F_P(Q) S_1^*PS_1 = \lambda \phi_1(P,Q) = \phi_1(\lambda P,Q) = F_{\lambda P}(Q) S_1^*(\lambda P)S_1,
$$
we see that
\begin{equation}\label{radial}
F_{\lambda P} = F_P, \qquad \lambda>0.
\end{equation}
Since
$$
\phi_1(P_1+P_2,Q) = \phi_1(P_1,Q) + \phi_1(P_2,Q) = S_1^*(F_{P_1}(Q)P_1 + F_{P_2}(Q)P_2)S_1,
$$
we have the implication
\begin{equation}\label{consistent}
P_1+P_2=P_3+P_4 ~\Rightarrow~ \lambda_1 P_1 + \lambda_2 P_2 = \lambda_3 P_3 + \lambda_4 P_4,
\end{equation}
for $\lambda_i=F_{P_i}(Q)$.

We proceed to show that $Q\mapsto F_P(Q)$ does not depend on a choice of $P$.
For this purpose, take one dimensional projections $P_1, P_2$ whose ranges are orthogonal.
We write
$$
U^*P_1U = E_{11}, \qquad U^*P_2U = E_{22}
$$
for a suitable unitary $U$.
Take
$$
P_3 = {1 \over 2}U(E_{11}+E_{12}+E_{21}+E_{22})U^*, \qquad P_4 = {1 \over 2}U(E_{11}-E_{12}-E_{21}+E_{22})U^*,
$$
which satisfies $P_1+P_2=P_3+P_4$.
The $2 \times 2$ left upper corners of $U^*(\lambda_1 P_1 + \lambda_2 P_2)U$ and  $U^*(\lambda_3 P_3 + \lambda_4 P_4)U$ are
$$
\begin{pmatrix}
\lambda_1 & 0 \\ 0 & \lambda_2
\end{pmatrix}
\quad \text{and} \quad
{1 \over 2}
\begin{pmatrix}
\lambda_3 + \lambda_4 & \lambda_3 - \lambda_4 \\ \lambda_3 - \lambda_4 & \lambda_3 + \lambda_4
\end{pmatrix},
$$
respectively, which implies $\lambda_1=\lambda_2$ by (\ref{consistent}).
Therefore, we have
$$
F_{P_1} = F_{P_2},
$$
for any rank one projections $P_1, P_2$ whose ranges are orthogonal.
Next, we take arbitrary linearly independent $P_1, P_2 \in \mathcal E_m$.
By the spectral decomposition of the rank two positive operator $P_1+P_2$, we write
$$
P_1+P_2 = \sigma_3 P_3 + \sigma_4 P_4,
$$
for rank one projections $P_3,P_4$ and $\sigma_3, \sigma_4 >0$.
By (\ref{consistent}), we have
$$
\lambda_1 P_1 + \lambda_2 P_2 = \lambda_3 (\sigma_3 P_3) + \lambda_4 (\sigma_4 P_4),
$$
where $\lambda_3 = F_{\sigma_3 P_3}(Q) = F_{P_3}(Q)$ and $\lambda_4 = F_{\sigma_4 P_4}(Q) = F_{P_4}(Q)$ by (\ref{radial}).
Since $P_3$ and $P_4$ are projections whose ranges are orthogonal, we have $\lambda_3=\lambda_4=:\lambda$.
It implies that
$$
\lambda_1 P_1 + \lambda_2 P_2 = \lambda (\sigma_3 P_3 + \sigma_4 P_4) = \lambda P_1 + \lambda P_2,
$$
thus $\lambda_1 = \lambda_2$.
Therefore, $F_P$ does not depend on the choice of $P \in \mathcal E_m$.
Similarly, $G_Q$ does not depend on the choice of $Q \in \mathcal E_n$.

By (\ref{trace}) and (\ref{phi1}), (\ref{phi2}), we get
$$
F(Q) \tr (S_1^*PS_1)= G(P) \tr(S_2^*QS_2).
$$
By the separation of variables, we conclude that
$$
{\tr (S_1^*PS_1) \over G(P)} = {\tr(S_2^*QS_2) \over F(Q)}
$$
is a constant, which we denote by $\kappa$.
The formulas (\ref{phi1}) and (\ref{phi2}) turn out to be
$$
\phi_1(P,Q) = {1 \over \kappa} \tr(S_2^*QS_2) S_1^*PS_1 \quad \text{and} \quad \phi_2(P,Q) = {1 \over \kappa} \tr(S_1^*PS_1) S_2^*QS_2.
$$
By (\ref{rep}) and (\ref{trace}), we have
$$
\Theta(P \otimes Q) = {1 \over \kappa} S_1^*PS_1 \otimes S_2^*QS_2.
$$
By a suitable scalar multiple of $S_1$ or $S_2$, we may assume $\kappa=1$ without loss of generality,
and so, obtain the following representation
$$
\Theta(A \otimes B) = S_1^*AS_1 \otimes S_2^*BS_2.
$$
When one of $\phi_1(\,\cdot\,,Q)$ and $\phi_2(P,\,\cdot\,)$ are of the form (A2), we get the representations
$$
\begin{aligned}
\Theta(A \otimes B) &= S_1^*A^\ttt S_1 \otimes S_2^*BS_2,\\
\Theta(A \otimes B) &= S_1^*A S_1 \otimes S_2^*B^\ttt S_2,\\
\Theta(A \otimes B) &= S_1^*A^\ttt S_1 \otimes S_2^*B^\ttt S_2.
\end{aligned}
$$

Finally, we consider the case when both cases (ii) and (iv) hold.
Since  $\phi_1(P,\,\cdot\,) : M_n \to M_m$ and $\phi_2(\,\cdot\,,Q) : M_m \to M_n$ are of the form (A), we have $m=n$ by Lemma \ref{rank1preserving}.
Since $\Theta \circ \text{\sf fl}$ satisfies (i) and (iii), we get the representations
$$
\begin{aligned}
\Theta(A \otimes B) = S_1^*B S_1 \otimes S_2^*AS_2&, \quad \Theta(A \otimes B) = S_1^*B^\ttt S_1 \otimes S_2^*AS_2, \\
\Theta(A \otimes B) = S_1^*B S_1 \otimes S_2^*A^\ttt S_2&, \quad \Theta(A \otimes B) = S_1^*B^\ttt S_1 \otimes S_2^*A^\ttt S_2,
\end{aligned}
$$
with $m=n$. This completes the proof of Theorem \ref{s_1}.

Therefore, we conclude that the following are equivalent for a Hermiticity preserving linear isomorphism $\Theta : M_m\ot M_n\to M_m\ot M_n$:
\begin{itemize}
\item
$\Theta(\cals_1)=\cals_1$,
\item
$\Gamma^\Theta : \phi \mapsto \choi^\Theta_\phi$
retains the correspondence between $\superpos_1$ and $\cals_1$,
\item
$\Gamma^\Theta : \phi \mapsto \choi^\Theta_\phi$
retains the correspondence between $\mathbb P_1$ and $\blockpos_1$,
\item
$\lan\ ,\ \ran_\Theta$ retains the duality between $\mathbb P_1$ and $\cals_1$,
\item
$\lan\ ,\ \ran_\Theta$ retains the duality between $\superpos_1$ and $\blockpos_1$,
\item
$\Theta$ is one of (\ref{exam_tr}) for nonsingular $s\in M_m$ and $t\in M_n$, or their composition.
\end{itemize}

As for the case of fixed $k$ with $1<k<m\meet n$, we conjecture that a given linear isomorphism $\Theta$ on $M_m\ot M_n$
satisfies $\Theta(\cals_k)=\cals_k$ if and only if
$\Theta$ satisfies $\Theta(\cals_k\setminus\cals_{k-1})=\cals_k\setminus\cals_{k-1}$ if and only if
$\Theta$ is one of (\ref{exam}) for nonsingular $s\in M_m$ and $t\in M_n$, or their composition.
Because the map $\ad_s$ preserves the trace if and only if $s$ is a unitary, the validity of the conjecture would imply that
$\Theta$ preserves both $\cals_k$ (respectively $\cals_k\setminus\cals_{k-1}$) and trace if and only if
$\Theta$ is one of (\ref{exam}) for unitaries $s\in M_m$ and $t\in M_n$, or their composition.
We recall that an isomorphism $\Theta$ on $M_m\ot M_n$ satisfying  $\Theta(\cals_k)=\cals_k$
sends rank one positive matrix whose range vector in $\mathbb C^m\ot \mathbb C^n$ has Schmidt rank $\le k$ to a positive matrix of same kind.
Therefore, \cite[Corollary 4.2]{johnston2011} tells us that if an isomorphism $\Theta$ satisfying $\Theta(\cals_k)=\cals_k$ for a fixed $k$
with $1\le k<m\meet n$ is completely positive
then $\Theta$ is $\ad_s\ot\ad_t$ for nonsingular $s\in M_m, t\in M_n$ or the flip operator with $m=n$ or their composition.


\end{document}